\def\ba{\boldsymbol{a}}
\def\bb{\boldsymbol{b}}
\def\bc{\boldsymbol{c}}
\def\bd{\boldsymbol{d}}
\def\be{\boldsymbol{e}}
\def\bp{\boldsymbol{p}}
\def\br{\boldsymbol{r}}
\def\bs{\boldsymbol{s}}
\def\bz{\boldsymbol{z}}
\def\bA{\boldsymbol{A}}
\def\bB{\boldsymbol{B}}
\def\bC{\boldsymbol{C}}
\def\bD{\boldsymbol{D}}
\def\bH{\boldsymbol{H}}
\def\bI{\boldsymbol{I}}
\def\bR{\boldsymbol{R}}
\def\bV{\boldsymbol{V}}
\def\bX{\boldsymbol{X}}
\def\bepsilon{\boldsymbol{\epsilon}}
\def\blambda{\boldsymbol{\lambda}}
\def\bmu{\boldsymbol{\mu}}
\def\btau{\boldsymbol{\tau}}
\newtheorem{theorem}{Theorem}
\theoremstyle{definition}
\algnewcommand\algorithmicinput{\textbf{Input:}}
\algnewcommand\Input{\item[\algorithmicinput]}
\algnewcommand\algorithmicoutput{\textbf{Output:}}
\algnewcommand\Output{\item[\algorithmicoutput]}
\algnewcommand\algorithmicinit{\textbf{Initialize:}}
\algnewcommand\Init{\item[\algorithmicinit]}
\newcommand*{\rom}[1]{\expandafter\@slowromancap\romannumeral #1@}
\begin{document}

\title{One-Bit Phase Retrieval: \\More Samples Means Less Complexity?}

\author{Arian Eamaz, \IEEEmembership{Student Member, IEEE}, Farhang Yeganegi, and \\ Mojtaba Soltanalian, \IEEEmembership{Senior Member, IEEE}
\thanks{This work was supported in part by the National Science Foundation Grant CCF-1704401.}
\thanks{A. Eamaz, F. Yeganegi and M. Soltanalian are with the Department of Electrical and Computer Engineering, University of Illinois Chicago, Chicago, IL 60607, USA (\emph{Corresponding author: Arian Eamaz}).}
}

\markboth{IEEE TRANSACTIONS ON SIGNAL PROCESSING, 2022 
}
{Shell \MakeLowercase{\textit{et al.}}: Bare Demo of IEEEtran.cls for IEEE Journals}
\maketitle

\begin{abstract}
The classical problem of phase retrieval has found a wide array of applications in optics, imaging and signal processing. In this paper, we consider the phase retrieval problem in a one-bit setting, where the signals are sampled using one-bit analog-to-digital converters (ADCs). A significant advantage of deploying one-bit ADCs in signal processing systems is their superior sampling rates as compared to their high-resolution counterparts. This leads to an enormous amount of one-bit samples gathered at the output of the ADC in a short period of time. We demonstrate that this advantage pays extraordinary dividends when it comes to convex phase retrieval formulations---namely that the often encountered matrix semi-definiteness constraints as well as rank constraints (that are computationally prohibitive to enforce), become redundant for phase retrieval in the face of a growing sample size. Several numerical results are presented to illustrate the effectiveness of the proposed methodologies.
\end{abstract}

\begin{IEEEkeywords}
Convex optimization, one-bit ADCs, phase retrieval, semi-definite relaxation, statistical signal processing.
\end{IEEEkeywords}

\IEEEpeerreviewmaketitle

\section{Introduction}
\IEEEPARstart{P}{hase} retrieval has gained significant interest in applied physics and statistical signal processing communities over the past decades \cite{millane,kim,Fienup93, Krist95, Sarnik,GS1,GS2,fienup1,fienup2,fienup3,candes2013phaselift, candes2014solving,jaganathan2013sparse,candes,chen,TWF,RWF,nayer2021sample,fogel2016phase,jagatap2017fast,bahmani2017phase}. This classical problem manifests as the recovery of an unknown signal solely from phaseless measurements that depend on the signal through a linear observation model. Due to the intrinsic difficulties of the recovery task \cite{sahinoglou1991phase}, recently, there have been many efforts to propose approximate or relaxed versions of the phase retrieval problem in a convex optimization language, particularly via semi-definite programming \cite{candes2013phaselift,candes2015phase}.

Quantization of the signals of interest through analog-to-digital converters (ADCs) is an important task in digital signal processing applications. A very large number of quantization levels is necessary in order to represent the original continuous signal in high-resolution scenarios. The large number of quantization bits, however, can cause a considerable increase in the overall power consumption and the manufacturing cost of ADCs, as well as a reduction in sampling rate \cite{eamaz2021modified}. Such disadvantages have motivated the researchers to investigate the idea of utilizing fewer bits for sampling. \emph{One-bit quantization} is an extreme quantization scenario, in which the signals are compared with given threshold levels at the ADCs, producing sign ($\pm1$) outputs. This enables signal processing equipments to sample at a very high rate, with a considerably lower cost and energy consumption, compared to their counterparts which employ multi-bit ADCs \cite{instrumentsanalog,mezghani2018blind,eamaz2021modified,sedighi2020one}. We further note that one-bit quantization with a fixed threshold (usually zero) can lead to difficulties in the estimation of the signal amplitude. Employing time-varying thresholds, however, has been shown to result in enhanced signal recovery performance in some recent works \cite{eamaz2021modified,eamaz2022covariance,AEamaz2022,qian2017admm,gianelli2016one,wang2017angular,xi2020gridless}.


\subsection{Contributions of the Paper}
While convex formulations of the phase retrieval problem promise a global solution, some of the introduced constraints are computationally costly; including the matrix rank and the positive semi-definite (PSD) constraints. However, we show that if more samples are available, the sheer number of samples can constrain the solution in a less costly manner and make such constraints redundant.
Note that, as mentioned earlier, by employing the one-bit quantization, sampling can be done at significantly higher rates. As a result, the emergence of one-bit sampling techniques paves the way for an investigation on the role of an increased sample size in the phase retrieval problem.

In this paper, we show that the phase retrieval problem can be tackled by taking advantage of the large number of linear observation inequalities that emerge naturally in the one-bit quantization regimen. Instead of considering the often-formulated trace relaxation problem, our approach to one-bit phase retrieval is presented as a randomized Kaczmarz algorithm-based recovery. We present our results on a proper selection of the sufficient number of samples. Furthermore, an algorithm is proposed based on our model to adaptively evaluate the time-varying sampling thresholds. The performance of our approach with an increased sample size is also investigated when noisy measurements are utilized.

\subsection{Organization of the Paper}
Since our approach takes root in convex phase retrieval, Section~\ref{sec_1} is dedicated to a survey of such formulations. In Section~\rom{3}, we will discuss the appearance of one-bit sampling with time-varying thresholds in the phase retrieval context through linear inequality constraints (defining a polyhedron feasible region), as well as the randomized Kaczmarz algorithm (RKA) that can be utilized to recover our desired signal.
To investigate the error recovery of the proposed algorithm, a theorem, which may be useful to select the number of measurement, is presented in Section~\rom{4}. 
Section~\rom{5} is devoted to comparing our method with PhaseLift and its one-bit version in terms of their computational burden. Based on our proposed polyhedron formulation, an algorithm is proposed to obtain the adaptive time-varying thresholds which benefit finding the signal of interest with more accuracy and less computational cost in Section~\rom{6}. In Section~\rom{7}, the noisy measurement scenario is studied owing to its importance in practical applications. Finally, Section~\rom{9} concludes the paper.

\underline{\emph{Notation:}}
We use bold lowercase letters for vectors and bold uppercase letters for matrices. $\mathbb{C}$ and $\mathbb{R}$ represent the set of complex and real numbers, respectively. $(\cdot)^{\top}$ and $(\cdot)^{\mathrm{H}}$ denote the vector/matrix transpose, and the Hermitian transpose, respectively. $I_{N}\in \mathbb{R}^{N\times N}$ is the identity matrix of size $N$. $\operatorname{Tr}(.)$ denotes the trace of the matrix argument. The spectral radius $\rho(\bB)$ of a matrix $\bB$ is defined as a maximum absolute value of its eigenvalues \cite{ortega1990numerical}. The Frobenius norm of a matrix $\bB\in \mathbb{C}^{M\times N}$ is defined as $\|\bB\|_{\mathrm{F}}=\sqrt{\sum^{M}_{r=1}\sum^{N}_{s=1}\left|b_{rs}\right|^{2}}$ where $\{b_{rs}\}$ are elements of $\bB$. The $\ell^{k}$-norm for a vector $\mathbf{b}$ is defined as $\|\bb\|^{k}_{k}=\sum_{i}b^{k}_{i}$. The Hadamard (element-wise) product of two matrices $\bB_{1}$ and $\bB_{2}$ is denoted as $\bB_{1}\odot \bB_{2}$. Additionally, the  Kronecker product is denoted as $\bB_{1}\otimes \bB_{2}$. The vectorized form of a matrix $\bB$ is written as $\operatorname{vec}(\bB)$. $\mathbf{1}_{s}$ is a $s$-dimensional all-one vector. Given a scalar $x$, we define $(x)^{+}$ as $\max\left\{x,0\right\}$. For an event $\mathcal{E}$, $\mathbb{I}_{(\mathcal{E})}$ is the indicator function for that event meaning that $\mathbb{I}_{(\mathcal{E})}$ is $1$ if $\mathcal{E}$ occurs, and $0$ otherwise. $f\asymp g$ means $f$ and $g$ are asymptotically equal. The cumulative distribution function (CDF) of the zero-mean Gaussian process $\bz\sim\mathcal{N}(0,\zeta)$ is given by
\begin{equation}
\label{eq:1bbb}
\Phi(\bz) \triangleq \frac{1}{\sqrt{2\pi}}\int^{z}_{-\infty}e^{-\frac{t^{2}}{2\zeta^{2}}} \,dt.
\end{equation}
To compare two different CDFs, the Hellinger distance may be utilized \cite{davenport20141}, which is defined as
\begin{equation}
\label{eq:1bbbb}
d^{2}_{H}\left(p,q\right) \triangleq \left(\sqrt{p}-\sqrt{q}\right)^{2}+\left(\sqrt{1-p}-\sqrt{1-q}\right)^{2},
\end{equation}
with $p,q\in [0,1]$.

\section{Convex Phase Retrieval: Opportunities and Challenges}
\label{sec_1}
To tackle the phase retrieval problem, many non-convex and local optimization algorithms have been developed over the years. Recently, however, convex programming formulations have come to the fore to yield \emph{global} solutions.
As a case in point, the \emph{PhaseLift} method in \cite{candes2013phaselift} adopts a convex optimization mathematical machinery to tackle the phase retrieval problem, ensuring a near exact recovery of the unknown signal. To do so, PhaseLift relies on a trace-norm relaxation that is used in lieu of the original non-convex rank minimization problem--more on this below. Due to the imposition of the positive semi-definite (PSD) constraint, the PhaseLift problem formulation joins the class of semi-definite programs (SDPs).

Suppose $\mathbf{x}\in \mathbb{C}^{n}$ 
is the discrete signal of interest that is observed linearly through the lens of sensing vectors $\ba_{j}$, with $\left\{\ba_{j}^{\mathrm{H}}\right\}$ constituting the rows of the sensing matrix $\bA \in \mathbb{C}^{m\times n}$. Our goal in phase retrieval is to recover the signal $\mathbf{x}$ from phaseless measurements $y_{j}$ \cite{candes2013phaselift,candes2015phase}:
\begin{equation}
\label{1n}
\begin{aligned}
y_{j}=\left|\ba^{\mathrm{H}}_{j} \mathbf{x}\right|, \quad j \in \mathcal{J}=\left\{1,\cdots,m\right\}.
\end{aligned}
\end{equation}
To ease the mathematical manipulation, one can use the squared version of (\ref{1n}), i.e.,
\begin{equation}
\label{eq:1nn} 
\begin{aligned}
y_{j}^{2} &=\mathbf{x}^{\mathrm{H}} \ba_{j} \ba^{\mathrm{H}}_{j} \mathbf{x}, \\
&=\operatorname{Tr}\left(\ba_{j} \ba^{\mathrm{H}}_{j}\mathbf{x x}^{\mathrm{H}}\right), \\
&=\operatorname{Tr}\left(\bV_{j} \bX\right),
\end{aligned}
\end{equation}
where $\bX=\mathbf{x x}^{\mathrm{H}}$ and $\bV_{j} ={\ba_{j} \ba^{\mathrm{H}}_{j}}$. Based on (\ref{eq:1nn}), the phase retrieval problem can be defined as,
\begin{equation}
\label{eq:1nnn}
\begin{aligned}
\text{find}\quad &\bX\\
\text{s.t.} \quad &\operatorname{Tr}\left(\bV_{j} \bX\right)=y_{j}^{2}, \\
& \operatorname{rank}\left(\bX\right)=1,\\
&\bX \succeq 0.
\end{aligned}
\end{equation}
To have a convex program as \cite{candes2013phaselift}, the problem (\ref{eq:1nnn}) is then relaxed as \cite{candes2013phaselift},
\begin{equation}
\label{eq:1nnnn}
\begin{aligned}
\min_{\bX}\quad &\operatorname{Tr}(\bX)\\
\text{s.t.} \quad &\operatorname{Tr}\left(\bV_{j} \bX\right)=y_{j}^{2}, \\
&\bX \succeq 0.
\end{aligned}
\end{equation}
The linear objective and constraints, along with the PSD constraint, turns (\ref{eq:1nnnn}) to a semi-definite program which is convex \cite{vandenberghe1996semidefinite}. Due to its convexity, there exists a wide array of numerical solvers including the popular Nesterov’s accelerated first-order method to tackle the problem above \cite{candes2015phase,nesterov2003introductory}.

Although, the rank-one and the PSD constraints are deemed necessary to the phase retrieval formulation, they lead to an increased computational cost even in cases where we deal with a convex optimization landscape. To enforce the PSD constraint, a projected gradient method is used in \cite{candes2015phase}, where the approximate solution should be projected onto a PSD cone at each iteration by recovering all eigenvalues and setting the negative eigenvalues to zero, which is quite expensive \cite{candes2015phase}.

An interesting alternative to enforcing the PSD constraint in (\ref{eq:1nnnn}) emerges when one increases the number of samples $m$, and solves the overdetermined linear system of equations with $m\geq n$. By collecting a large number of samples, the linear constraints $\operatorname{Tr}\left(\bV_{j} \bX\right)=y_{j}^{2}$ may actually yield the optimum inside the PSD area $\bX \succeq 0$. As a result of increasing the number of samples, it is possible that the intersection of these hyperplanes will shrink to the optimal point without the need to consider the PSD constraint. However, this idea may face practical limitations in the case of multi-bit quantization systems since ADCs capable of ultra-high rate sampling are difficult and expensive to produce. Moreover, one cannot necessarily expect these constraints to intersect with the PSD cone in such a way to form a finite-volume space before the optimum is obtained \cite{candes2013phaselift}.

As we will show in the next section, by defining the phase retrieval in the one-bit sampling regimen, linear equality constraints are superseded with linear inequalities. Therefore, by increasing the number of samples, we may create a finite-volume space inside the cone $\bX \succeq 0$; making the PSD constraint no longer informative or required. From a practical point of view, one-bit sampling is done efficiently at a very high rate with a significantly lower cost compared to its high-resolution counterpart. Thus, by employing one-bit ADCs, it is practical, indeed natural, to study the game-changing opportunities that emerge in the context of phase retrieval due to the availability of a large number of samples.

\begin{figure*}[t]
	\centering
	\begin{subfigure}[b]{0.32\textwidth}
		\includegraphics[width=1\linewidth]{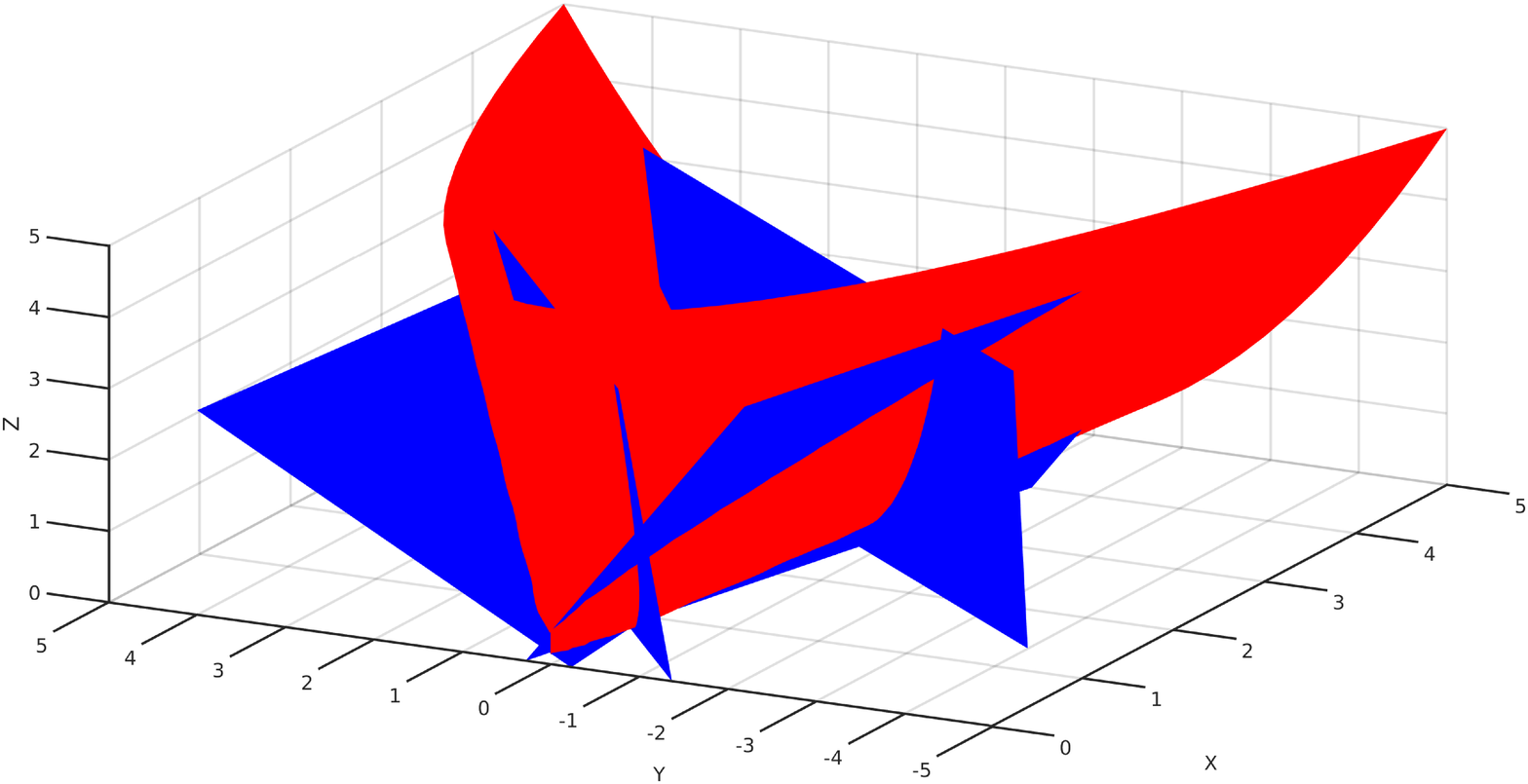}
		\caption{$m=6$}
	\end{subfigure}
	\begin{subfigure}[b]{0.32\textwidth}
		\includegraphics[width=1\linewidth]{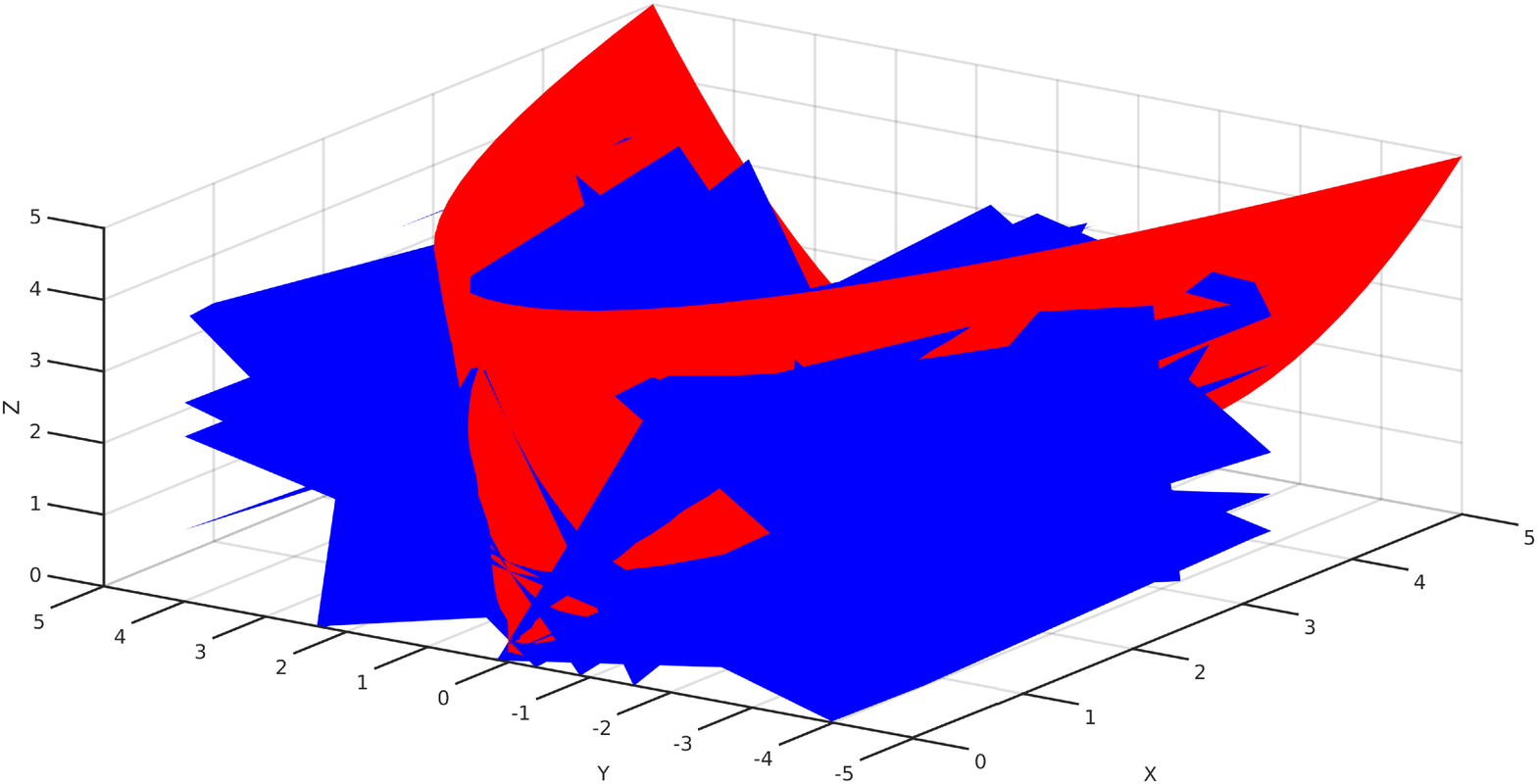}
		\caption{$m=20$}
	\end{subfigure}
	\begin{subfigure}[b]{0.32\textwidth}
		\includegraphics[width=1\linewidth]{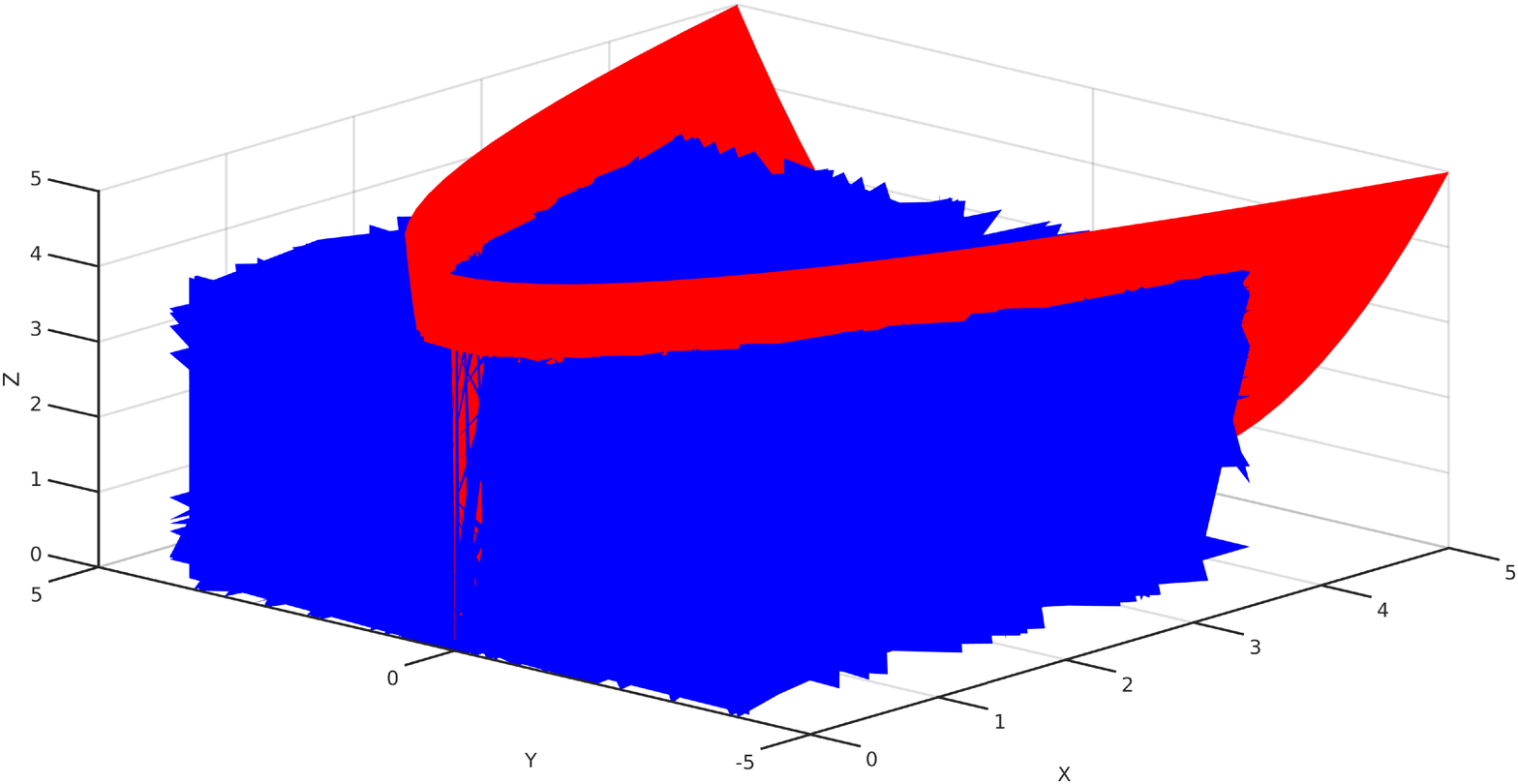}
		\caption{$m=50$}
	\end{subfigure}
		\begin{subfigure}[b]{0.32\textwidth}
		\includegraphics[width=1\linewidth]{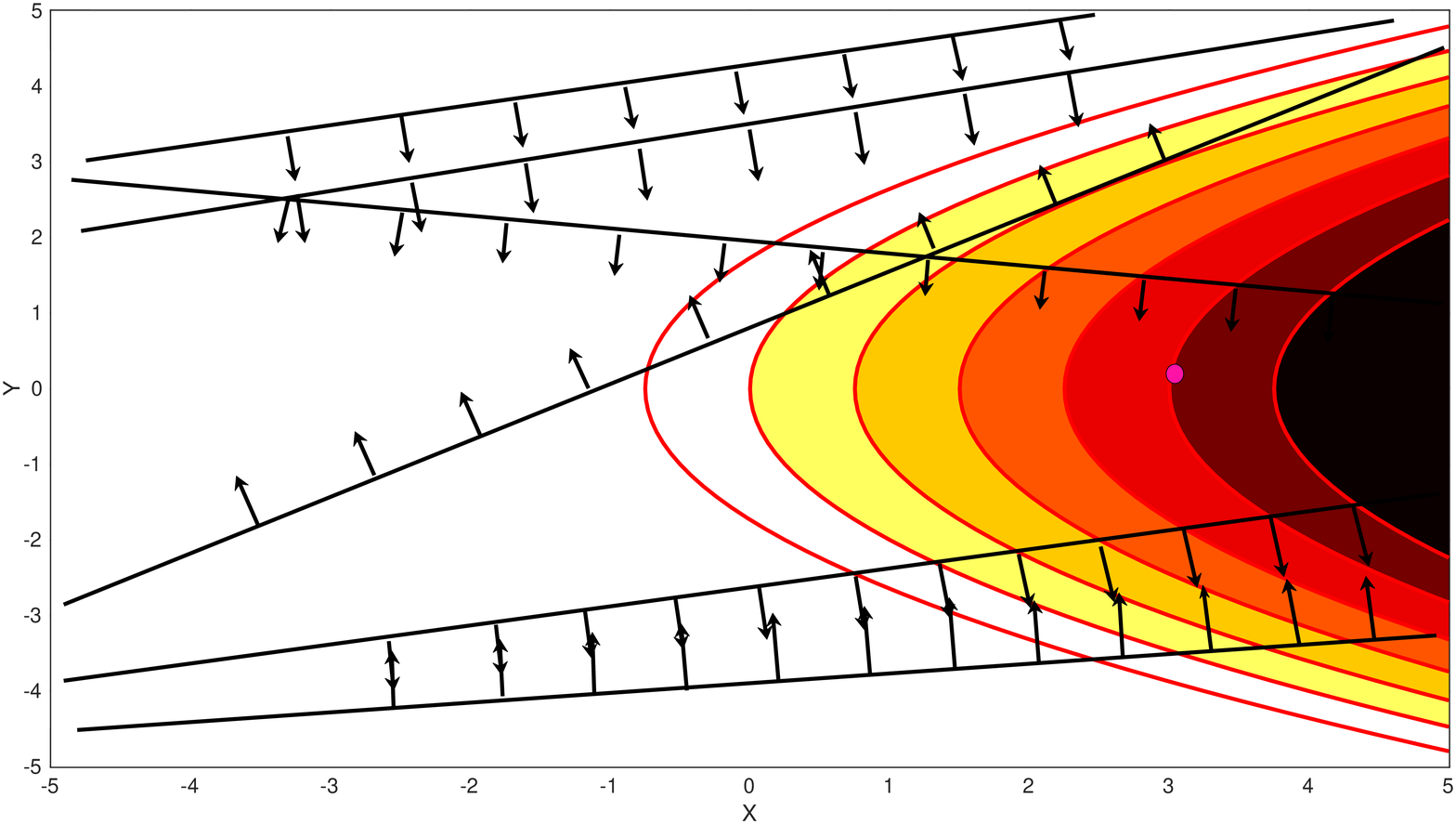}
		\caption{$m=6$}
	\end{subfigure}
		\begin{subfigure}[b]{0.32\textwidth}
		\includegraphics[width=1\linewidth]{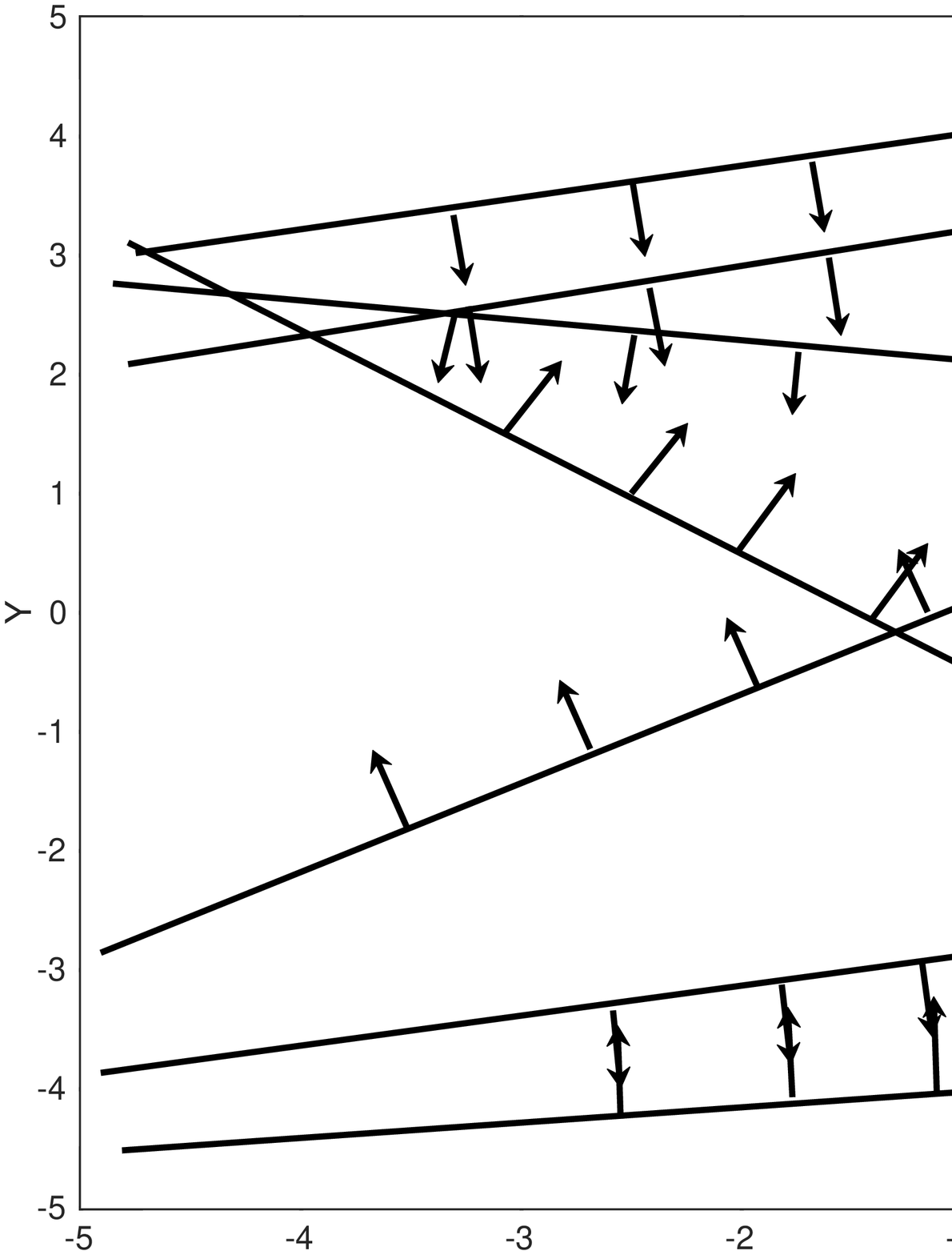}
		\caption{$m=20$}
	\end{subfigure}
		\begin{subfigure}[b]{0.32\textwidth}
		\includegraphics[width=1\linewidth]{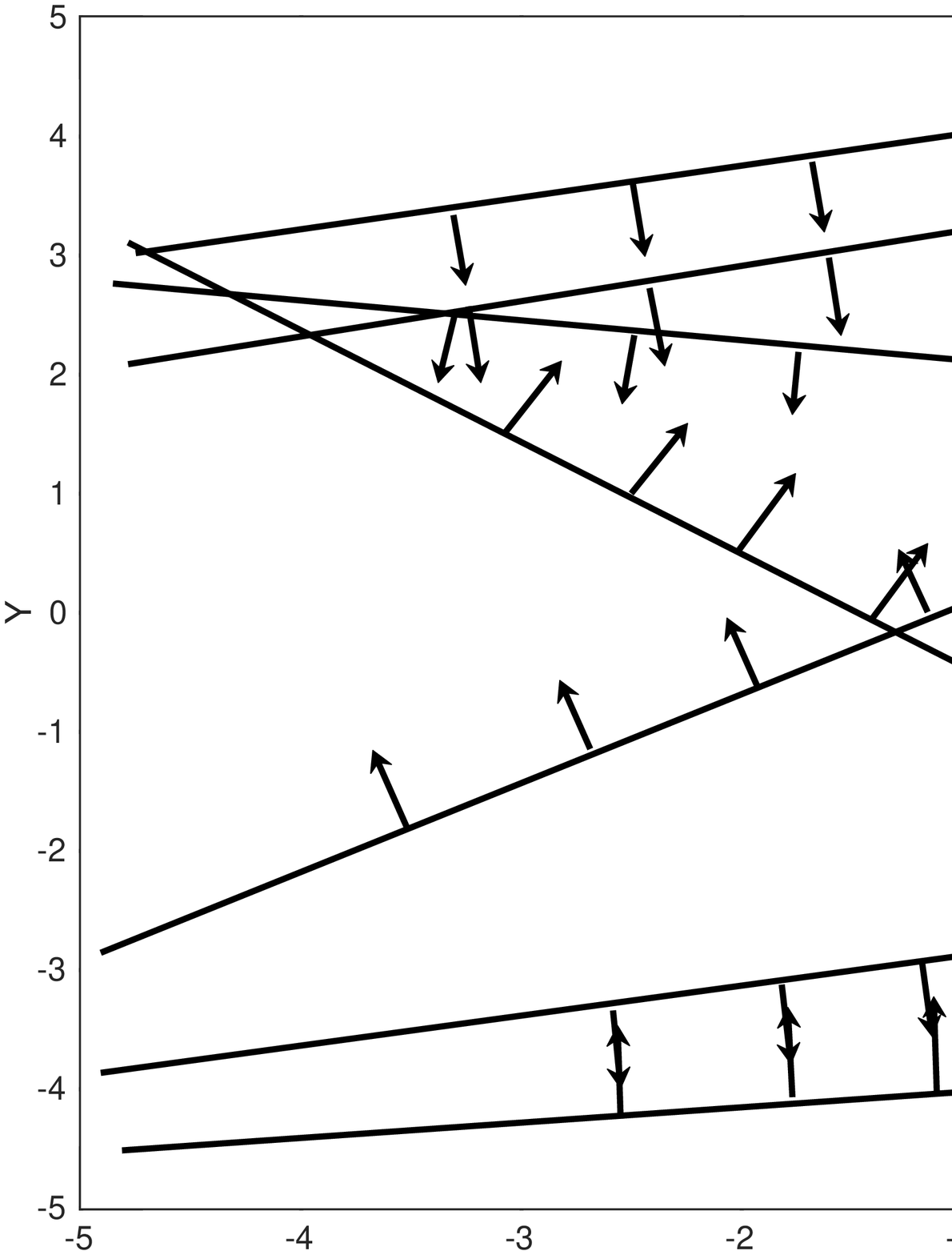}
		\caption{$m=50$}
	\end{subfigure}
	
	\caption{Shrinkage of the polyhedron space (\ref{eq:25}) in blue, ultimately placed within the PSD cone $\bX \succeq 0$ shown with contours and its red boundary, when the number constraints (samples) grows large. The arrows point to the half-space associated with each inequality constraint. The evolution of the feasible regime is depicted with increasing samples in three cases: (a) and (d) small sample-size regime, constraints not forming a finite-value polyhedron; (b) and (e) medium sample-size regime, constraints forming a finite-volume polyhedron, parts of which are outside the cone; (c) and (f) large sample-size regime, constraints forming a finite-volume polyhedron inside the PSD cone, making the PSD constraint redundant. The optimal point representing the signal to be recovered is shown by purple.}
	
	\label{figure_1n}
\end{figure*}

\section{One-Bit Phase Retrieval with Sample Abundance}
\label{sec_3}
As indicated earlier, employing one-bit quantization provides a practical opportunity to address an important question as to whether more samples can mean less complexity in the context of the phase retrieval problem.
We begin our efforts by defining a linear system of inequalities representing the phase retrieval problem in the one-bit quantization system deploying time-varying thresholds leading to the \emph{one-bit phase retrieval} formulation. To recover the desired symmetric positive semi-definite matrix $\bX^{\star}$, we propose an algorithm which relies on exploiting the large number of one-bit sampled data and solves the associated linear system of inequalities by taking advantage of \emph{the randomized Kaczmarz algorithm} (RKA). 

\subsection{Problem Formulation}
\label{one-bit}
In the one-bit sampling scenario, we only observe sign data $\br$, given as
\begin{equation}
\label{eq:3}
\begin{aligned}
\br = \operatorname{sgn}\left(\mathbf{y}-\btau\right),
\end{aligned}
\end{equation}
where $\btau$ is a time-varying\footnote{Note that although we are focusing on temporal sampling, the low cost associated with one-bit ADCs enables the deployment of large arrays of ADCs that are spatially distributed, which is of immediate use in various communications and imaging applications. This paves the way for spatially-varying sampling thresholds, possibly along with time-varying thresholds. Fortunately, the mathematical foundations and algorithms we present in this work can be directly applied to cases where spatially-varying thresholds are used as well.} threshold. Let $e$ denote the phase vector to be recovered. The one-bit phase retrieval problem can be formulated as:
\begin{equation}
\label{eq:1}
\begin{aligned}
\text{find} \quad &\mathbf{y}, \mathbf{x}, \be\\
\mathbf{y}\odot \mathbf{e}&=A\mathbf{x},\\
\mathbf{y}&\in \Psi,
\end{aligned}
\end{equation}
where $\Psi$ is a feasible region created by the one-bit constraints
\begin{equation}
\label{eq:2}
r_{j} \left(y_{j}-\tau_{j}\right)\geq 0, \quad j\in \mathcal{J},
\end{equation}
or equivalently,
\begin{equation}
\Omega\left(\mathbf{y}-\btau\right) \succeq 0,
\end{equation}
with the matrix $\Omega$ defined as $\Omega = \operatorname{diag}\left\{\br\right\}$. Inspired by (\ref{eq:1}), in the following, we present a reformulation of the one-bit phase retrieval problem. Since $y_{j} \geq 0$ based on (\ref{1n}), assuming $\tau_{j} \geq 0$, the following relation holds:
\begin{equation}
\label{eq:10}
\begin{aligned}
y_{j} \lesseqgtr \tau_{j}  \Longleftrightarrow  y^{2}_{j} \lesseqgtr \tau^{2}_{j}.
\end{aligned}
\end{equation}
Therefore, the set of inequalities in (\ref{eq:2}) can be rewritten as
\begin{equation}
\label{eq:11}
\begin{aligned}
r_{j} \left(y_{j}-\tau_{j}\right)\geq 0 &\Longrightarrow  r_{j} \left(y^{2}_{j}-\tau^{2}_{j}\right)\geq 0,\quad j \in \mathcal{J}.
\end{aligned}
\end{equation}
Consequently, one can recast (\ref{eq:1}) in the same spirit as (\ref{eq:1nnn}):
\begin{equation}
\label{eq:5}
\begin{aligned}
\text{find} \quad &\bX\\ \text{s.t.}\quad &r_{j} \left(\operatorname{Tr}\left(\bV_{j}\bX\right)-\tau^{2}_{j}\right) \geq 0,\\
& \operatorname{rank}\left(\bX\right)=1,\\
&\bX \succeq 0.
\end{aligned}
\end{equation}
Moreover, based on (\ref{eq:1nnnn}), the one-bit version of the PhaseLift formulation may be written as
\begin{equation}
\label{eq:6}
\begin{aligned}
\min_{\bX} \quad &\operatorname{Tr}(\bX)\\ \text{s.t.} \quad &r_{j} \left(\operatorname{Tr}\left(\bV_{j}\bX\right)-\tau^{2}_{j}\right) \geq 0,\\ &\bX \succeq 0,
\end{aligned}
\end{equation}
which we refer to as \emph{one-bit PhaseLift} in this paper. It is worth noting that the problem in (\ref{eq:6}) also belongs to the class of semi-definite programs (SDPs). As discussed earlier, in the asymptotic case of the one-bit phase retrieval problem, the PSD constraint may not be required. Moreover, the linear system of inequalities in (\ref{eq:6}) can be reformulated as
\begin{equation}
\label{eq:40000}
\operatorname{Tr}\left(\bV_{j}\bX\right)=\operatorname{vec}\left(\bV^{\top}_{j}\right)^{\top}\operatorname{vec}\left(\bX\right), \quad j\in \mathcal{J},
\end{equation}
where we use the matrix identity \cite{van1996matrix},
\begin{equation}
\operatorname{Tr}\left(\bH^{\top} \bD\right)=\operatorname{vec}(\bH)^{\top} \operatorname{vec}(\bD),
\end{equation}
with $\bH$ and $\bD$ being two arbitrary square matrices. As a result, the constraints imposed in the optimization problem (\ref{eq:6}) can be simplified as
\begin{equation}
\label{eq:9}
\begin{aligned}
\min_{\bX} \quad &\operatorname{Tr}(\bX)\\
\text{s.t.} \quad & \left(\bR\odot \bV\right)\operatorname{vec}\left(\bX\right) \succeq \br \odot \btau^{2},\\
\end{aligned}
\end{equation}
where $\bR=\mathbf{1}_{n^{2}}^{\top}\otimes \br$, and $\bV$ is a matrix with $\operatorname{vec}\left(\bV^{\top}_{j}\right)^{\top}$ as its $j$-th rows ($j\in \mathcal{J}$).

Note that dropping the SDP constraint is not the only advantage of having access to a large number of one-bit sampled data in the context of phase retrieval problem. In fact, we claim that by our approach the rank-one, or its relaxed versions potentially manifested as a trace minimization, also become redundant. To see why, observe that in the asymptotic case of one-bit phase retrieval, the space constrained by the defined inequalities in (\ref{eq:11}), which is a polyhedron, \emph{shrinks} to become contained inside the feasible region in terms of the PSD constraint. However, this shrinking space always contains the globally optimal rank-one solution, with a volume that is decreasing with an increasing number of samples. Thus, instead of the optimization problems in (\ref{eq:6}) and (\ref{eq:9}), we formally define the said polyhedron, i.e.,
\begin{equation}
\label{eq:20}
\begin{aligned}
\mathcal{P} = \left\{\bX \mid r_{j} \left(\operatorname{Tr}\left(\bV_{j}\bX\right)-\tau^{2}_{j}\right) \geq 0,\quad j\in \mathcal{J}\right\},
\end{aligned}
\end{equation}
equivalently restated based on (\ref{eq:40000}) as
\begin{equation}
\label{eq:25}
\mathcal{P}=\left\{\bX \mid r_{j} \operatorname{vec}\left(\bV^{\top}_{j}\right)^{\top}\operatorname{vec}\left(\bX\right) \geq r_{j}\tau^{2}_{j}, \quad j\in \mathcal{J} \right\}.
\end{equation}
A numerical investigation of (\ref{eq:25}) reveals that by increasing the number of samples $m$, the space formed by the intersection of half-spaces (inequality constraints) can fully shrink to the optimal point inside the PSD constraint---see Fig.~\ref{figure_1n} for an illustrative example of this phenomenon. As can be seen in this figure, the black lines representing the linear inequalities form a finite-volume space around the optimal point displayed by the purple circle inside the PSD cone (the elliptical region\footnote{Note that a two-dimensional slice of the three-dimensional PSD cone typically assumes an elliptical form.}) by growing the number of one-bit samples. In (a)/(d), constraints are not enough to create a finite-volume space, whereas in (b)/(e) such constraints can create the desired finite-volume polyhedron space which, however, is not fully inside the PSD cone. Lastly, in (c)/(f), the created finite-volume space shrinks to be fully inside the PSD cone.

To find the signal of interest in the polyhedron (\ref{eq:25}), we use the RKA without enforcing other costly constraints. This is due to the fact that the solution may be efficiently approached by solving the linear system of inequalities presented in (\ref{eq:25}).

Note that two signal models for the phase retrieval problem were introduced in \cite{candes2013phaselift}: (1) The \emph{real-valued model}: the unknown signal $\mathbf{x}$ and $\{\ba_{j}\}$ are real. (2) The \emph{complex-valued model}: the unknown signal $\mathbf{x}$ and $\{\ba_{j}\}$ are complex \cite{candes2013phaselift}. Both settings will be considered in the following proposed algorithm.
\begin{figure*}[t]
	\centering
	\begin{subfigure}[b]{0.45\textwidth}
		\includegraphics[width=1\linewidth]{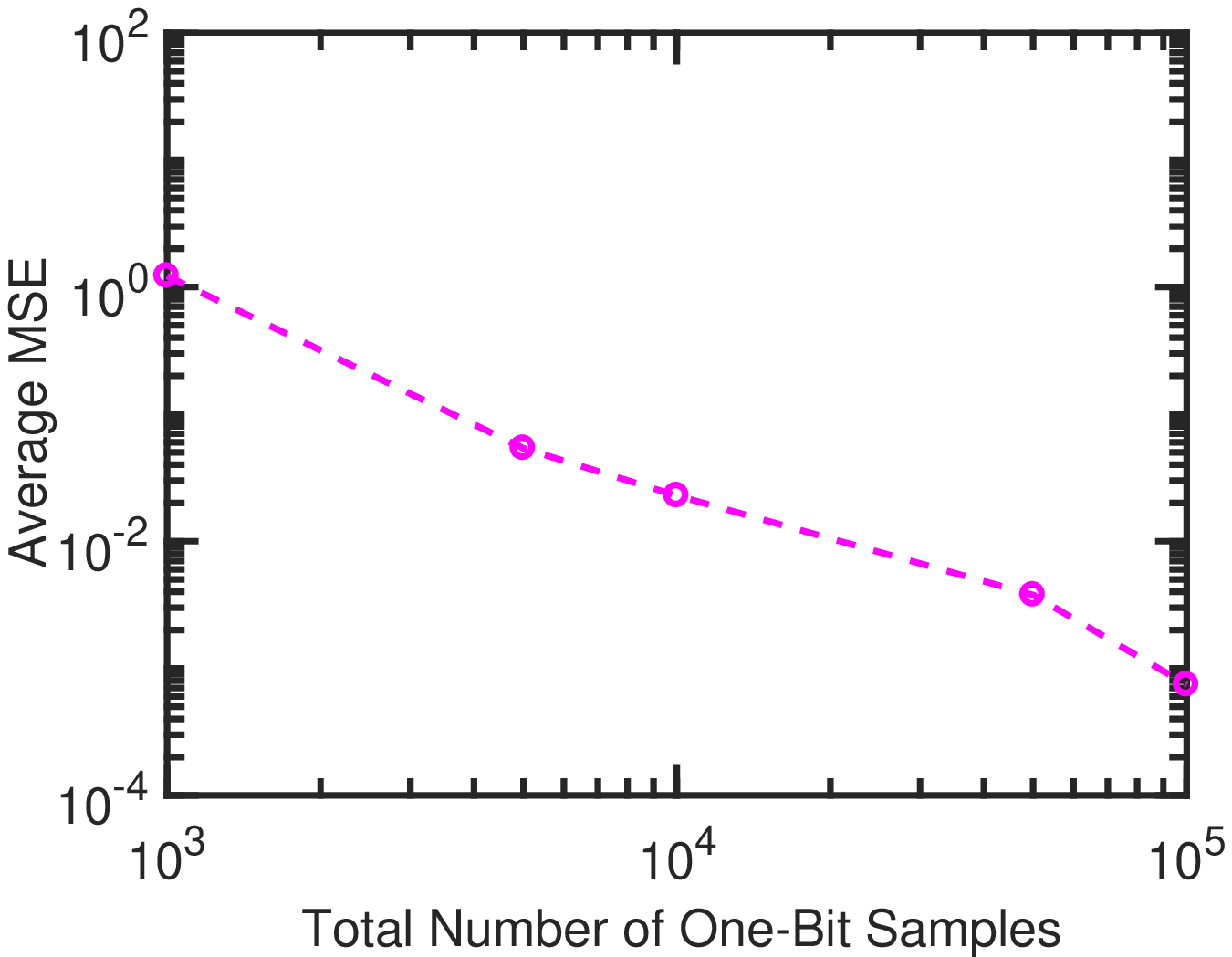}
		\caption{\text{}}
	\end{subfigure}
	\begin{subfigure}[b]{0.45\textwidth}
		\includegraphics[width=1\linewidth]{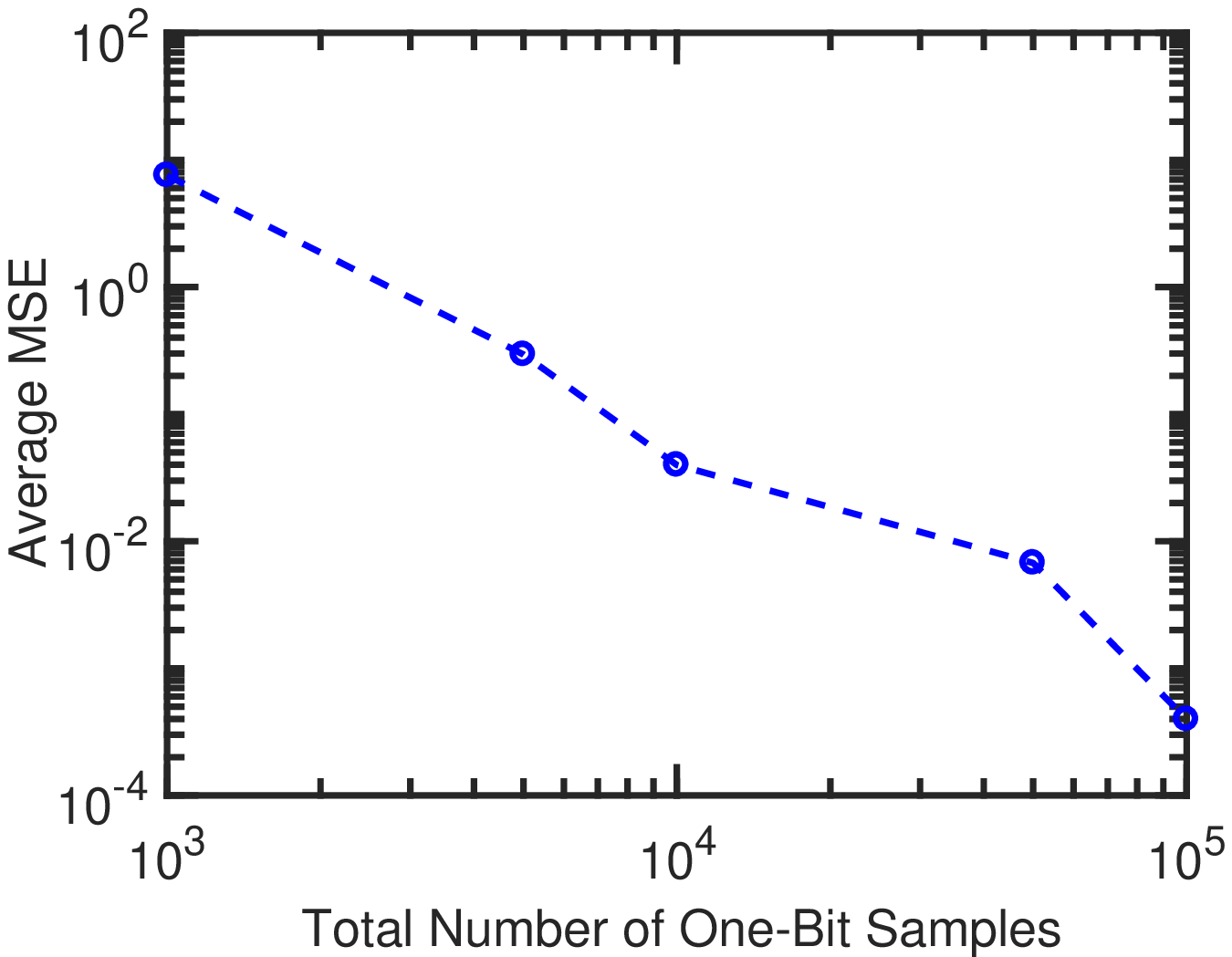}
		\caption{}
	\end{subfigure}
	\caption{Average MSE for signal recovery (in terms of spectral radius of $\bX$) for different one-bit sample sizes with OPeRA: (a) $\mathbf{x}\in \mathbb{R}^{n}$, (b) $\mathbf{x}\in \mathbb{C}^{n}$.}
	\label{figure_2}
\end{figure*}
    \subsection{One-Bit Phase Retrieval Algorithm (OPeRA)}
\label{sec_CRKA}
To recover the desired signal in the one-bit phase retrieval problem, we aim to find a point in the polyhedron (\ref{eq:25}) instead of solving the SDP in (\ref{eq:1nnnn}). As discussed in Section~\ref{one-bit}, when we exploit a large number of samples, the solution of (\ref{eq:25}) is increasingly likely to capture the desired point, i.e. the signal of interest, inside the PSD conical region. The proposed signal recovery relies on the RKA, which is a powerful tool for solving real- or complex-valued linear system of equations, or inequalities through projections \cite{briskman2015block,leventhal2010randomized}.

Accordingly, we propose an algorithm to find the desired matrix $\bX^{\star}$ in (\ref{eq:5}) by (i) using abundant measurements, in order to create the finite-volume space inside the PSD conical region (discussed further in Section~\rom{4}), and (ii) solving (\ref{eq:25}) via the RKA. We name our algorithm the \emph{O}ne-bit \emph{P}has\emph{e} \emph{R}etrieval \emph{A}lgorithm (OPeRA).

The RKA is a \emph{sub-conjugate gradient method} to solve overdetermined linear systems, i.e, $\bC\mathbf{x}\leq\mathbf{b}$ where $\bC$ is a ${m\times n}$ matrix with $m>n$ \cite{leventhal2010randomized,strohmer2009randomized}. Conjugate-gradient methods immediately turn the mentioned inequality to an equality in the following form:
\begin{equation}
\label{bikhod}
  \left(\bC\mathbf{x}-\mathbf{b}\right)^{+}=0,
\end{equation}
and then, approach the solution by the same process as used for systems of equations. Without loss of generality,
consider (\ref{bikhod}) to be a polyhedron:
\begin{equation}
\label{eq:21}
\begin{aligned}
\begin{cases}\bc_{j} \mathbf{x} \leq b_{j} & \left(j \in I_{\leq}\right), \\ \bc_{j} \mathbf{x}=b_{j} & \left(j \in I_{=}\right),\end{cases}
\end{aligned}
\end{equation}
where the disjoint index sets $I_{\leq}$ and $I_{=}$ partition our sample index set $\mathcal{J}$, and $\{\bc_{j}\}$ denote the rows of $\bC$. Based on this problem, the projection coefficient $\beta_{i}$ of the RKA is defined as \cite{leventhal2010randomized,briskman2015block,dai2013randomized}:
\begin{equation}
\label{eq:22}
\beta_{i}= \begin{cases}\left(\bc_{j} \mathbf{x}_{i}-b_{j}\right)^{+} & \left(j \in I_{\leq}\right), \\ \bc_{j} \mathbf{x}_{i}-b_{j} & \left(j \in I_{=}\right).\end{cases}
\end{equation}
Also, the unknown column vector $\mathbf{x}$ is iteratively updated as:
\begin{equation}
\label{eq:23}
\mathbf{x}_{i+1}=\mathbf{x}_{i}-\frac{\beta_{i}}{\left\|\bc_{j}\right\|^{2}_{2}} \bc^{\mathrm{H}}_{j},
\end{equation}
where, at each iteration $i$, the index $j$ is chosen independently at random from the set $\mathcal{J}$, following the distribution
\begin{equation}
\label{eq:24}
P\{j=k\}=\frac{\left\|\bc_{k}\right\|^{2}_{2}}{\|\bC\|_{\mathrm{F}}^{2}}.
\end{equation}

To ensure a limited error, the feasible region in (\ref{eq:25}) cannot be an infinite space in an asymptotic sense. Fortunately, by introducing more samples, the problem can form a polyhedron with a bounded volume containing the desired point. Even more interesting, by adding more inequality constraints in (\ref{eq:25}), the shrinkage of the said polyhedron will put a downward pressure on the error between the desired and recovered points (each informative sample will shrink this space). We will show that by increasing the number of constraints and effective sampling, this error approaches zero. Moreover, as a result of using an overdetermined linear system of inequalities, the convergence of the RKA is guaranteed \cite{leventhal2010randomized,briskman2015block}.

It is worth noting that, in our problem, we only have the inequality partition $I_{\leq}$. Herein, the row vectors $\{\bc_{j}\}$ and the scalars $\{b_{j}\}$ used in the RKA (\ref{eq:21})-(\ref{eq:24}) are $-\left\{r_{j} \operatorname{vec}\left(\bV^{\top}_{j}\right)^{\top}\right\}$ and $-\left\{r_{j}\tau^{2}_{j}\right\}$, respectively.
\begin{figure*}[t]
	\centering
	\begin{subfigure}[b]{0.45\textwidth}
		\includegraphics[width=1\linewidth]{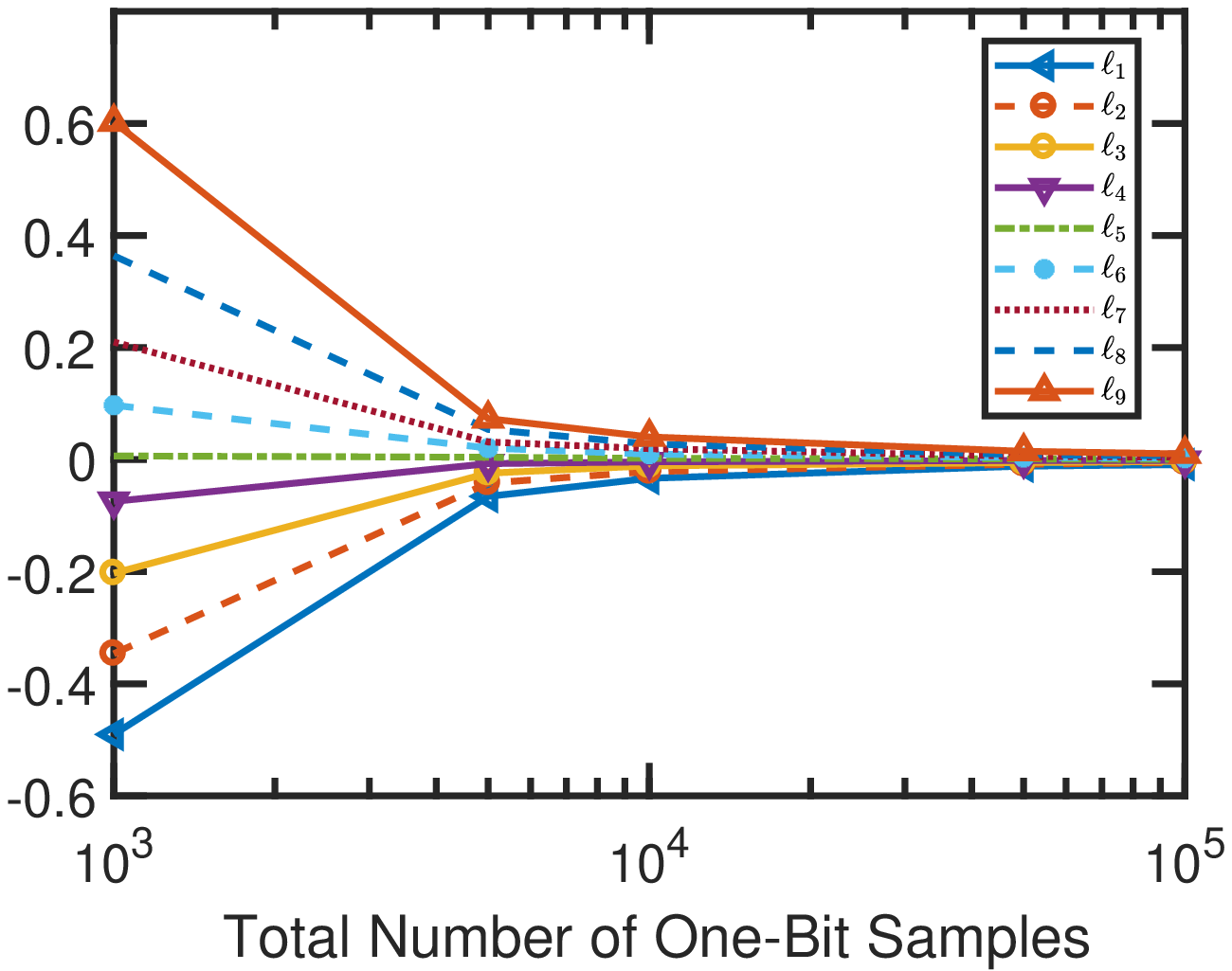}
		\caption{\text{}}
	\end{subfigure}
	\begin{subfigure}[b]{0.45\textwidth}
		\includegraphics[width=1\linewidth]{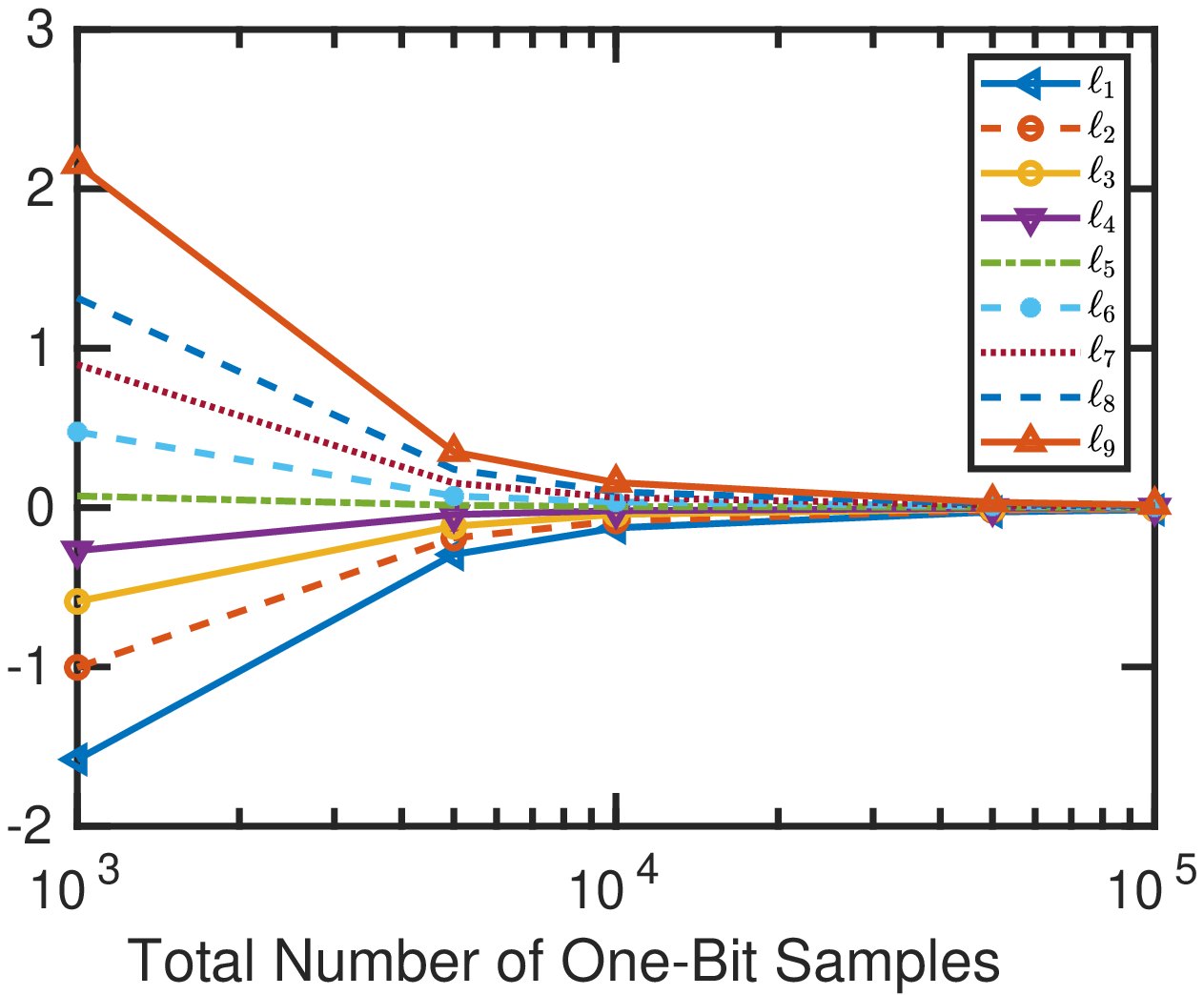}
		\caption{}
	\end{subfigure}
	\caption{The eigenvalues $\{\ell_{i}\}$ of $\bar{\bX}$ (excluding the maximum eigenvalue) averaged over $10$ experiments: (a) $\mathbf{x}\in \mathbb{R}^{n}$, (b) $\mathbf{x}\in \mathbb{C}^{n}$. Deploying a large number of samples leads to obtaining an $\bX$ that is ``increasingly" rank-one PSD.}
	\label{figure_3}
\end{figure*}

\begin{figure*}[t]
	\hspace{.3cm}
	\begin{subfigure}[b]{0.45\textwidth}
		\includegraphics[width=1\linewidth]{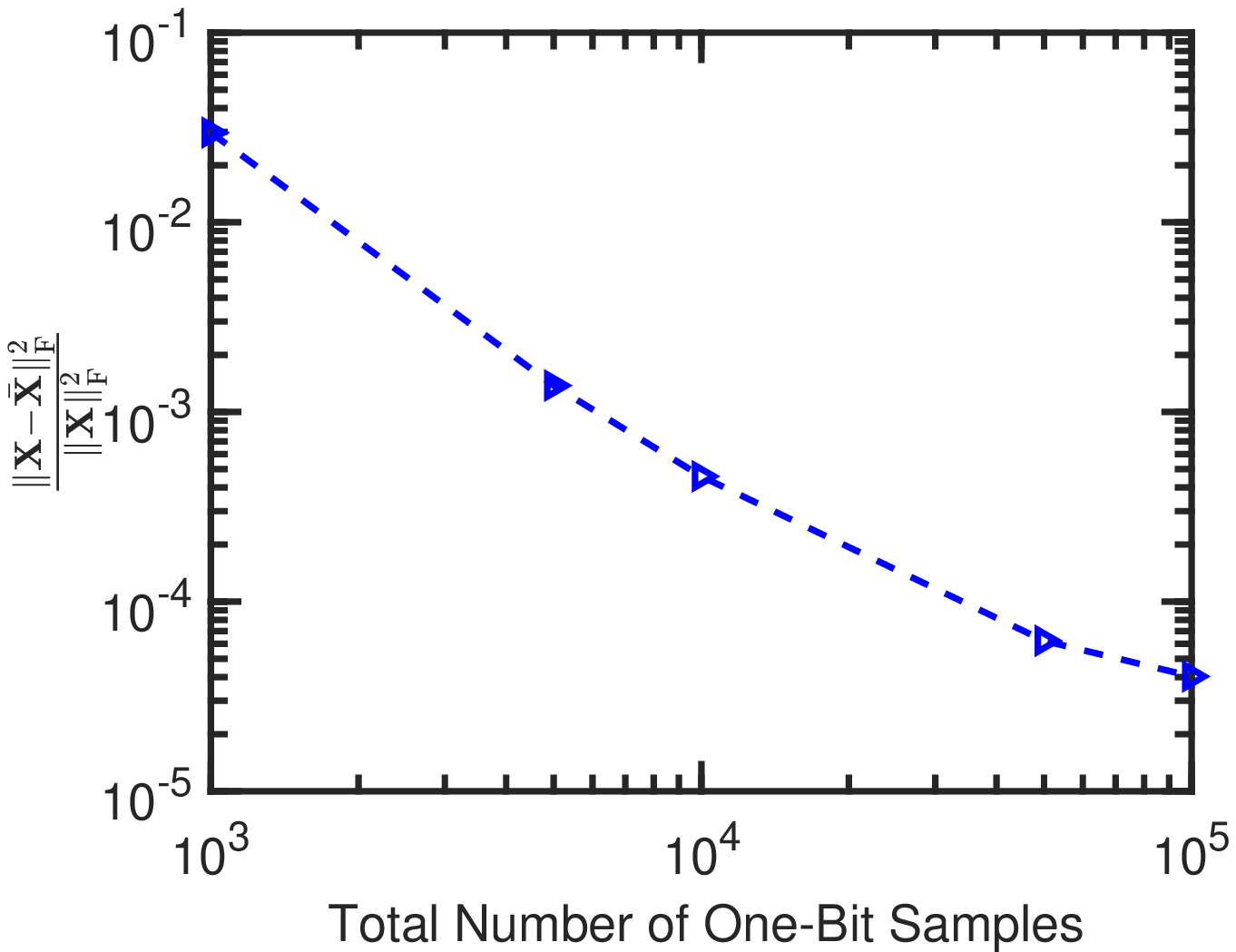}
		\caption{\text{}}
	\end{subfigure}
	\begin{subfigure}[b]{0.455\textwidth}
		\includegraphics[width=1\linewidth]{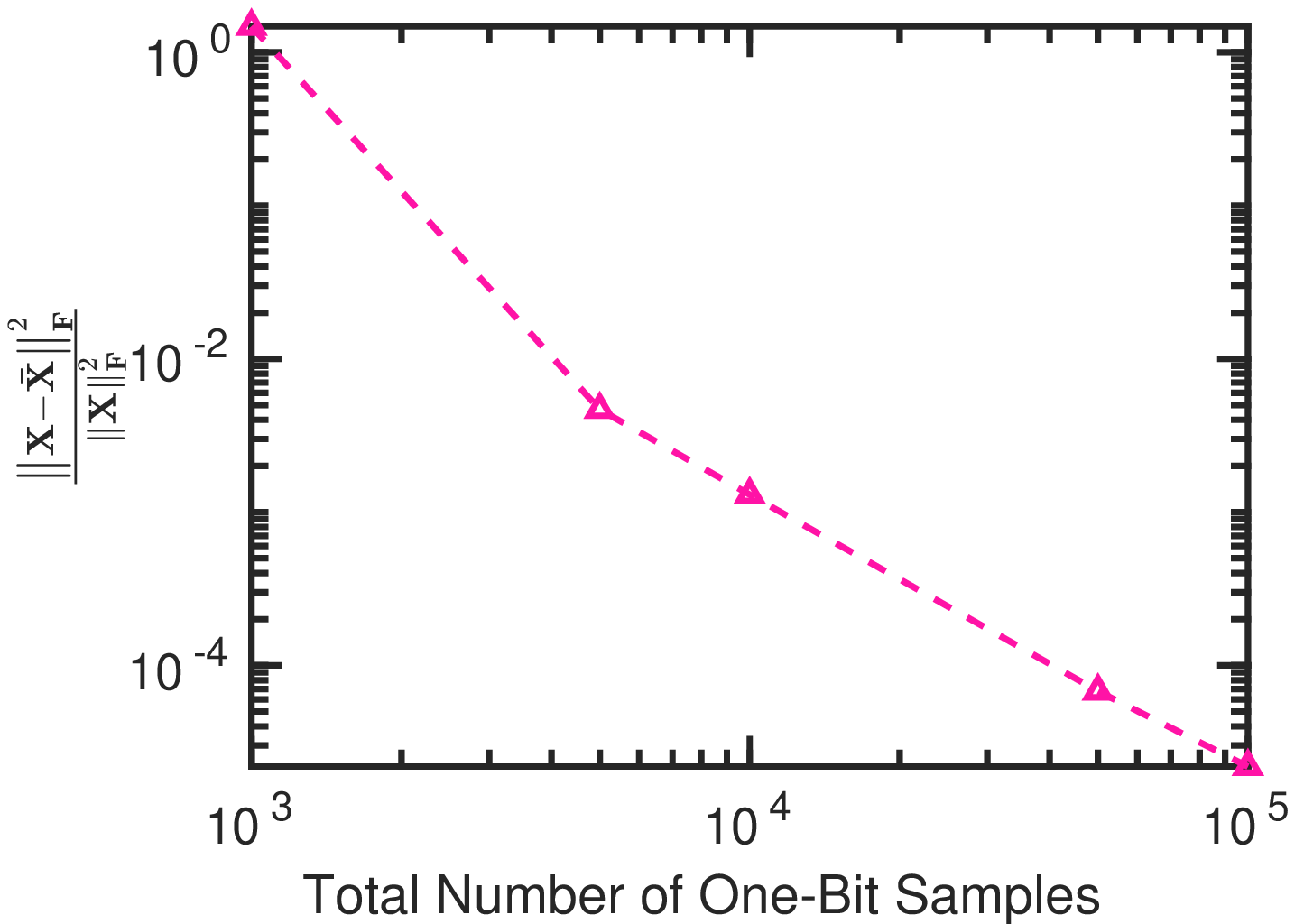}
		\caption{}
	\end{subfigure}
	\caption{Average NMSE for the Frobenius norm error between the desired matrix $\bX^{\star}$ and its recovered version $\bar{\bX}$ for different one-bit sample sizes with the RKA applied to (\ref{eq:25}) when (a) $\mathbf{x}\in \mathbb{R}^{n}$, (b) $\mathbf{x}\in \mathbb{C}^{n}$.}
	\label{figure_3_frob}
\end{figure*}



\subsection{Numerical Illustrations for OPeRA}
\label{NUM}
We numerically examine the effect of growing the sample size in the recovery of the desired matrix $\bX^{\star}$ in OPeRA. We will use the spectral radius metric which is particularly informative in the recovery of rank-one matrices. Note that for a positive semi-definite matrix such as our desired matrix $\bX^{\star}$, the spectral radius is equal to the Frobenius norm of the matrix \cite{van1996matrix}. In all experiments, the input signals were generated as $\mathbf{x}\sim \mathcal{N}\left(\mathbf{0},I_{n}\right)+j\mathcal{N}\left(\mathbf{0},I_{n}\right)$ for the complex-valued model, and $\mathbf{x}\sim \mathcal{N}\left(\mathbf{0},I_{n}\right)$, for the real-valued model. For both models, the rows of the sensing matrix $\bA$ were generated as $\ba_{j} \sim \mathcal{N}\left(\mathbf{0},I_{n}\right)$. Accordingly, we made use of the time-varying thresholds $\btau \sim \operatorname{Lognormal}\left(0,1\right)$. We define the experimental mean square error (MSE) between the true spectral radius $\rho\left(\bX\right)$ and its estimate $\rho\left(\bar{\bX}\right)$ as
\begin{equation}
\label{eq:27}
\mathrm{MSE}\triangleq\frac{1}{E}\sum^{E}_{e=1}\left|\rho_{e}\left(\bX^{\star}\right)-\rho_{e}\left(\bar{\bX}\right)\right|^{2},
\end{equation}
where $E$ is the number of experiments. Each presented data point is averaged over $10$ experiments. The results are obtained for the number of samples $m\in \left\{1000, 5000, 10000, 50000, 100000\right\}$.

Fig.~\ref{figure_2} appears to confirm the possibility of recovering the spectral radius of $\bX^{\star}$ from the large number of one-bit sampled data with time-varying thresholds by OPeRA for the real-valued and the complex-valued models, respectively. As expected, the performance of the recovery will be significantly enhanced as the number of one-bit samples grows large. In all experiments, the obtained maximum eigenvalue is positive, which is equal to the spectral radius of the recovered matrix.

An important part of our work is to show that our method can recover an $\bX$ that is ``increasingly" rank-one and PSD by growing the number of one-bit sampled data. To do so, at first, we show in Fig.~\ref{figure_2} that the maximum eigenvalue is accurately recovered. Next, we present that all eigenvalues $\{\ell_{i}\}$ of the recovered matrix $\bar{\bX}$ except the maximum eigenvalue approach zero by increasing the number of one-bit sampled data. Fig.~\ref{figure_3} appears to confirm this claim for both real-valued and complex-valued models. The presented results are averaged over $10$ experiments and the eigenvalues are arranged in descending order.

To further investigate the effectiveness of OPeRA in both real-valued and complex-valued models, Fig.~\ref{figure_3_frob} illustrates the squared Frobenius norm of the error normalized by the squared Frobenius norm of the desired matrix $\bX^{\star}$, defined as
\begin{equation}
\label{eq:4000}
\mathrm{NMSE}\triangleq\frac{\left\|\bX^{\star}-\bar{\bX}\right\|^{2}_{\mathrm{F}}}{\left\|\bX^{\star}\right\|^{2}_{\mathrm{F}}},
\end{equation}
where the presented results are averaged over $10$ experiments. Fig.~\ref{figure_3_frob} appears to confirm that the performance of the recovery is enhanced by increasing the number of one-bit samples.

\section{Bounding the Recovery Error}
\label{error_bound}
In this section, we derive the convergence rate of our proposed algorithm in its search for the optimal point in the PSD cone. Moreover, an upper bound for the recovery error $\mathbb{E}\left\{\left\|\bX_{i}-\bX^{\star}\right\|_{\mathrm{F}}^{2}\right\}$ will be introduced. This bound will be leveraged to find a lower bound on the number of measurements $m$; the critical role of which was readily discussed in Section~\ref{sec_3}.

\subsection{Chernoff Bound Analysis for OPeRA}
\label{Chernoff_bound}
We first investigate the convergence of OPeRA through a probabilistic lens. Define the distance between the optimal point $\bX^{\star}$ and the $j$-th hyperplane presented in (\ref{eq:25}) as
\begin{equation}
\label{distance}
\begin{aligned}
H_{j}\left(\bX^{\star},\bX_{i}\right) &= \left|\operatorname{Tr}\left(\bV_{j}\bX_{i}\right)-\operatorname{Tr}\left(\bV_{j}\bX^{\star}\right)\right|,\\&=\left|\operatorname{Tr}\left(\bV_{j}\left(\bX_{i}-\bX^{\star}\right)\right)\right|,\quad j\in\mathcal{J},
\end{aligned}
\end{equation}
where $\bX_{i}$ is the solution from the RKA iterations. From an intuitive point of view, it is easy to observe that by generally reducing the distances between $\bX^{\star}$ and the constraint-associated hyperplanes, the possibility of \emph{capturing} the optimal point is increased.

Suppose $\mathcal{K}$ is the cardinality of the set of the hyperplanes presented in (\ref{eq:25}), a portion of the whole sample size, i.e., $\mathcal{K}\leq m$ , which will effectively form a polyhedron inside the PSD cone around the optimal point $\bX^{\star}$ if the number of samples is sufficient. The average of the distances $\left\{H_{j}\right\}$ around the optimal point is obtained as
\begin{equation}
\label{average}
\mathcal{E}\left(\bX^{\star}\right) = \frac{1}{\mathcal{K}}\sum^{\mathcal{K}}_{j=1}H_{j}\left(\bX^{\star},\bX_{i}\right),\quad \mathcal{K}\subseteq\mathcal{J}.
\end{equation}
For a specific sample size $m$, when the area of the finite-volume space around the optimal point is reduced, $\mathcal{E}\left(\bX^{\star}\right)$ is diminished as well. Define the overall average distance $T_{\text{ave}}$ as
\begin{equation}
\label{ave}
\begin{aligned}
T_{\text{ave}} &= \frac{1}{m}\sum^{m}_{j=1}H_{j}\left(\bX^{\star},\bX_{i}\right),\quad m\in\mathcal{J}.
\end{aligned}
\end{equation}
Increasing the number of samples leads to smaller values of $\mathcal{E}\left(\bX^{\star}\right)$ and $T_{\text{ave}}$. Therefore, the possibility of creating the finite-volume space around the desired point increases in the asymptotic sample-size scenario, with $m\geq m^{\star}$, where $m^{\star}$ is the minimal measurement size. The \emph{Chernoff Bound} \cite{chernoff1952measure,chernoff2014career} can shed light on this phenomenon as illustrated bellow.
\begin{theorem}
\label{theorem_0}
Consider the distances $\left\{H_{j}\left(\bX^{\star},\bX_{i}\right)\right\}$ between the desired point $\bX^{\star}$ and the hyperplanes of the polyhedron defined in (\ref{eq:25}) to be i.i.d. random variables. 
\begin{itemize}
    \item The Chernoff bound of $T_{\text{ave}}$ in (\ref{ave}) is given by
\begin{equation}
\label{eq:theorem_cher}
\operatorname{Pr}\left(T_{\text{ave}}=\frac{1}{m}\sum^{m}_{j=1}H_{j}\left(\bX^{\star},\bX_{i}\right)\leq a\right)\geq 1-\inf_{t\geq 0}\frac{M_{T}}{e^{ta}},
\end{equation}
where $M_{T}$ is the moment generating function (MGF) of $T_{\text{ave}}$, given as
\begin{equation}
\label{eq:psi}
M_{T} = \left(1+t\frac{\mu^{(1)}_{H_{j}}}{m}+\cdots+t^{\kappa}\frac{\mu^{(\kappa)}_{H_{j}}}{\kappa!m^{\kappa}}+\mathcal{R}\left(m\right)\right)^{m},
\end{equation}
with $\mu^{(\kappa)}_{H_{j}}=\mathbb{E}\left\{H^{\kappa}_{j}\right\}$, and $\mathcal{R}$ denoting a bounded reminder associated with truncating the Taylor series expansion of $M_{T}$.
\item $M_{T}$ is decreasing with an increasing sample size in the sample abundance scenario, leading to an increasing lower bound in (\ref{eq:theorem_cher}).
\end{itemize}
\end{theorem}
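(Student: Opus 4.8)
The plan is to establish the two bullets in Theorem~\ref{theorem_0} more or less in order, starting from the classical Chernoff/Markov machinery and then reading off the asymptotic behavior of the moment generating function. First I would recall that for any nonnegative random variable $T_{\text{ave}}$ and any $t\geq 0$, Markov's inequality applied to $e^{tT_{\text{ave}}}$ gives $\operatorname{Pr}\left(T_{\text{ave}}\geq a\right)\leq e^{-ta}\,\mathbb{E}\left\{e^{tT_{\text{ave}}}\right\}=M_{T}/e^{ta}$; taking complements and then the infimum over $t\geq 0$ yields exactly \eqref{eq:theorem_cher}. This part is essentially bookkeeping once we agree that $M_{T}$ denotes $\mathbb{E}\left\{e^{tT_{\text{ave}}}\right\}$.

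Next I would derive the expression \eqref{eq:psi} for $M_{T}$. Using the i.i.d.\ assumption on $\left\{H_{j}\right\}$ and $T_{\text{ave}}=\frac1m\sum_{j=1}^m H_j$, independence factorizes the MGF as $M_{T}=\prod_{j=1}^{m}\mathbb{E}\left\{e^{(t/m)H_{j}}\right\}=\left(\mathbb{E}\left\{e^{(t/m)H_{1}}\right\}\right)^{m}$. Then I would Taylor-expand the single-variable MGF $\mathbb{E}\left\{e^{(t/m)H_{1}}\right\}=\sum_{\ell\geq 0}\frac{(t/m)^{\ell}}{\ell!}\mathbb{E}\left\{H_{1}^{\ell}\right\}$, truncate after the $\kappa$-th term, and collect the tail into the remainder $\mathcal{R}(m)$; substituting $\mu^{(\ell)}_{H_j}=\mathbb{E}\left\{H_j^{\ell}\right\}$ recovers \eqref{eq:psi}. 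I would note that the remainder is controlled (e.g.\ by a Lagrange-form bound on the Taylor remainder together with finiteness of a sufficiently high moment of $H_1$), which is where the ``bounded remainder'' hypothesis is used.

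For the second bullet, the idea is that in the sample-abundance regime each distance $H_j\left(\bX^{\star},\bX_i\right)$ is small---indeed, as argued around \eqref{average}--\eqref{ave}, $T_{\text{ave}}$ and the moments $\mu^{(\ell)}_{H_j}$ shrink as the polyhedron \eqref{eq:25} tightens around $\bX^{\star}$---so each factor $1+t\frac{\mu^{(1)}_{H_j}}{m}+\cdots$ in \eqref{eq:psi} decreases toward $1$. I would make this precise by writing $M_{T}=\exp\!\left(m\log\!\left(1+t\frac{\mu^{(1)}_{H_j}}{m}+\cdots\right)\right)$ and using $m\log(1+u)\leq mu$ for the relevant small $u$, so that $M_{T}$ is dominated by a quantity proportional to the moments $\mu^{(\ell)}_{H_j}$, which themselves decrease with growing $m$. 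Consequently $M_{T}$ decreases and the lower bound $1-\inf_{t\geq 0}M_{T}/e^{ta}$ in \eqref{eq:theorem_cher} increases.

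The main obstacle I anticipate is making the monotonicity claim fully rigorous: the decrease of $M_{T}$ in $m$ is a statement about \emph{two} competing effects---the explicit $1/m$ scalings and the $m$-th power on one hand, and the implicit dependence of the moments $\mu^{(\ell)}_{H_j}$ on $m$ (through the shrinking polyhedron) on the other. Pinning down the rate at which $\mu^{(\ell)}_{H_j}$ decays with $m$ (and arguing it beats the combinatorial $m$-th power) is the delicate step; I would handle it by invoking the geometric shrinkage of the feasible polyhedron established in Section~\ref{sec_3} to bound $\mu^{(1)}_{H_j}$ (hence, by Jensen or moment comparison, the higher moments), and by controlling $\mathcal{R}(m)=o(1/m^{\kappa})$ so that it does not interfere with the leading-order analysis.
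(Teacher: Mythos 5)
Your treatment of the first bullet coincides with the paper's: Markov's inequality applied to $e^{tT_{\text{ave}}}$, the i.i.d.\ factorization $M_{T}=\mathbb{E}\{e^{tH_{j}/m}\}^{m}$, and a truncated Taylor expansion yielding (\ref{eq:psi}) with a bounded remainder justified by finiteness of the moments. That part is fine and essentially identical to the paper's argument.

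For the second bullet you take a genuinely different route, and it is the weaker one. The paper keeps the moments $\mu^{(\kappa)}_{H_{j}}$ fixed and isolates the \emph{explicit} dependence of $M_{T}$ on $m$: it replaces the $m$-th power in (\ref{eq:psi}) by a Pad\'{e} (moment-matching) rational approximant $\frac{a_{0}+a_{1}/m}{b_{0}+b_{1}/m}$, differentiates in $m$, and reduces negativity of the derivative to $\sigma^{2}_{H_{j}}=\mu^{(2)}_{H_{j}}-\bigl(\mu^{(1)}_{H_{j}}\bigr)^{2}\geq 0$ --- in effect the standard concentration of a sample mean, which needs no assumption about how the polyhedron (\ref{eq:25}) shrinks. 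You instead bound $M_{T}=\exp\!\left(m\log(1+u)\right)\leq e^{mu}$ and then appeal to the moments themselves decaying with $m$ through the geometric shrinkage of the feasible region. Two problems follow. First, showing that an \emph{upper bound} on $M_{T}$ decreases does not show that $M_{T}$ decreases; the bound $m\log(1+u)\leq mu$ discards exactly the $-\tfrac{t^{2}}{2m}\bigl(\mu^{(1)}_{H_{j}}\bigr)^{2}$ term whose interplay with $\tfrac{t^{2}}{2m}\mu^{(2)}_{H_{j}}$ (i.e.\ the variance) is what actually drives the monotonicity in the paper's argument. Second, the decay rate of $\mu^{(\ell)}_{H_{j}}$ in $m$ is never quantified anywhere in the paper (and you acknowledge you cannot yet pin it down), so your argument rests on an unproven lemma that the paper's route simply does not need. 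If you restructure the second bullet around the explicit $1/m$ scaling and the nonnegativity of the variance --- whether via the Pad\'{e} approximant as in the paper or via the expansion $M_{T}\asymp e^{t\mu^{(1)}_{H_{j}}}e^{t^{2}\sigma^{2}_{H_{j}}/(2m)}$ --- the gap closes and the moment-shrinkage discussion becomes supporting intuition rather than a load-bearing step.
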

\begin{proof}
The MGF of $T_{\text{ave}}$ is given by
\begin{equation}
\label{eq:mgf_T}
\begin{aligned}
M_{T} = \mathbb{E}\left\{e^{t\frac{1}{m}\sum^{m}_{j=1}H_{j}}\right\}&= \prod^{m}_{j=1}\mathbb{E}\left\{e^{t\frac{H_{j}}{m}}\right\}\\&= \mathbb{E}\left\{e^{t\frac{H_{j}}{m}}\right\}^{m}.
\end{aligned}
\end{equation}
By using the Taylor series expansion, one can write
\begin{equation}
M_{T}=\mathbb{E}\left\{1+t\frac{H_{j}}{m}+t^{2}\frac{H^{2}_{j}}{2m^{2}}+t^{3}\frac{H^{3}_{j}}{6m^{3}}+\cdots\right\}^{m},
\end{equation}
which leads to the formulation (\ref{eq:psi}). Note that due to the fact that the distances $\left\{H_{j}\right\}$ can be considered to be finite values, their moments always exist. 

It is straightforward to verify that $M_{T}$ is an analytic function and that $\left|\mathcal{R}\left(m^{\star}\right)\right|$ is bounded. Let $t_{0}$ denote the value of $t$ making the upper bound in (\ref{eq:theorem_cher}) infimum. To prove that $M_{T}$ is a decreasing function in the asymptotic sample-size case ($m>m^{\star}$), we use the \text{Padé} approximation (PA) which can asymptotically approximate $M_{T}$ with a rational function of given order through the \emph{moment matching} technique as follows \cite{eamaz2021modified,eamaz2022covariance,AEamaz2022}:
\begin{equation}
\label{pade_1}
\forall~m>m^{\star}: ~M_{T}= \left(1+\cdots+t^{\kappa}\frac{\mu^{(\kappa)}_{H_{j}}}{\kappa!m^{\kappa}}\right)^{m}\asymp\frac{a_{0}+\frac{a_{1}}{m}}{b_{0}+\frac{b_{1}}{m}}, 
\end{equation}
where $\left\{a_{0}, a_{1}, b_{0}, b_{1}\right\}$ are the PA coefficients as given in \cite{eamaz2021modified}. The above rational approximation\footnote{Considering the Taylor series expansion of $M_{T}$, the PA with the utilized orders presented in (\ref{pade_1}) will approximate $M_{T}$ with an error in the order of $\mathcal{O}(m^{-3})$ for $m>m^{\star}$.} is a decreasing function; a fact that can be verified by taking its first derivative with respect to $m$. The negativity of the derivative is easily concluded by observing that
\begin{equation}
\label{pade_2}
\sigma^{2}_{H_{j}} = \mu^{(2)}_{H_{j}}-\left(\mu^{(1)}_{H_{j}}\right)^{2} \geq 0,
\end{equation}
where $\sigma^{2}_{H_{j}}$ is the non-negative variance of the random variable $H_{j}$.

\end{proof}

\subsection{Recovery Error Upper Bound for OPeRA}
\label{OPeRA_bound}
In order to find the signal of interest in the polyhedron (\ref{eq:25}), we are utilizing the RKA which leads to the following convergence bound \cite{polyak1964gradient,strohmer2009randomized,briskman2015block,leventhal2010randomized}:
\begin{equation}
\label{bound1}
\mathbb{E}\left\{\left\|\mathbf{x}_{i}-\mathbf{x}^{\star}\right\|_{2}^{2}\right\} \leq q^{i}\left\|\mathbf{x}_{0}-\mathbf{x}^{\star}\right\|_{2}^{2},
\end{equation}
where $\mathbf{x}^{\star}$ is a desired point, $q \in (0,1)$ is a function of the \emph{condition number} of the matrix $\bC$, and $i$ is the number of required iterations for the RKA. In our problem, $\mathbf{x}_{i}=\operatorname{vec}\left(\bX_{i}\right)$, $\mathbf{x}^{\star}=\operatorname{vec}\left(\bX^{\star}\right)$,  and $\bC=\bR\odot\bV$. Therefore, the right-hand side of (\ref{bound1}) may be recast as
\begin{equation}
\label{bound0}
\mathbb{E}\left\{\left\|\operatorname{vec}\left(\bX_{i}\right)-\operatorname{vec}\left(\bX^{\star}\right)\right\|_{2}^{2}\right\}=\mathbb{E}\left\{\left\|\bX_{i}-\bX^{\star}\right\|_{\mathrm{F}}^{2}\right\}.
\end{equation}
It is clear from (\ref{bound1}) that by using a well-chosen initial point $\mathbf{x}_{0}$ or by increasing the number of iterations $i$, the recovery error can be further contained. Nevertheless, in the proposed recovery approach, it is deemed necessary to have the sufficient number of samples (inequalities) in order to guarantee a finite-volume feasible region and a bounded recovery error. Once our search area is located inside the PSD cone, we may effectively employ (\ref{bound1}) for the convergence rate. The convergence rate of the RKA is useful when we have a linear system of inequalities. On the other hand, in the one-bit phase retrieval, the main constraints, i.e. the rank-one and the PSD, are non-linear and they may be considered to be redundant by deploying the enough number of samples. Thus, (\ref{bound1}) is insufficient to present the convergence rate of OPeRA. We can make (\ref{bound1}) relevant to the one-bit phase retrieval problem by taking a penalty function into consideration:
\begin{equation}
\label{bound2}
\begin{aligned}
\mathbb{E}\left\{\left\|\bX_{i}-\bX^{\star}\right\|_{\mathrm{F}}^{2}\right\}& \leq q^{i}\left\|\bX_{0}-\bX^{\star}\right\|_{\mathrm{F}}^{2}+\Psi\left(m\right),
\end{aligned}
\end{equation}
where $\Psi(.)$ is a decreasing function in the abundance sample-size regime, such that if the number of samples is enough to satisfy the PSD constraint, the penalty function approaches zero $\Psi\left(m\right)\rightarrow 0$. Based on our discussion in Section~\ref{Chernoff_bound}, a good example for $\Psi\left(m\right)$ can be $M_{T}-M_{\infty}$, where $M_{\infty}=\lim_{m\rightarrow \infty} M_{T}$.

To find a bound for the sufficient number of measurements $m$ to create a finite-volume space inside the PSD cone and the penalty function starts to be zero, we utilize the tail function of the penalty given by
\begin{equation}
\label{tail}
\Psi^{\star}( m)=\Psi(m)\mathbb{I}_{\left(m\geq m^{\star}\right)}.
\end{equation}
Tails of decreasing functions may be asymptotically approximated by an exponential function \cite{chernoff2014career}. Mathematically, this may be expressed as
\begin{equation}
\label{eq:6585}
\exists~\epsilon_{0}, \gamma_{1} \in \mathbb{R}^{+},\quad\sup_{m} \left|\Psi^{\star}(m) - \epsilon_{0}e^{-\gamma_{1} m}\right|<\varepsilon,
\end{equation}
where $\varepsilon$ is an arbitrarily small positive number. Therefore, the boundary (\ref{bound2}) is reformulated for $m\geq m^{\star}$ as
\begin{equation}
\label{bound3}
\begin{aligned}
\mathbb{E}\left\{\left\|\bX_{i}-\bX^{\star}\right\|_{\mathrm{F}}^{2}\right\}& \leq q^{i}\left\|\bX_{0}-\bX^{\star}\right\|_{\mathrm{F}}^{2}+\epsilon_{0}e^{-\gamma_{1} m}.
\end{aligned}
\end{equation}

\subsection{Lower Bound on the Number of Measurements}
The algorithm termination criterion is considered to be
\begin{equation}
\label{eq:6565}
\mathbb{E}\left\{\left\|\bX_{i}-\bX^{\star}\right\|_{\mathrm{F}}^{2}\right\}\leq \epsilon_{1}.
\end{equation}
Based on this criterion and (\ref{bound3}), the following theorem is presented in order to find a lower bound for the number of required measurements in OPeRA.
\begin{theorem}
\label{theorem_1}
To recover the desired PSD matrix $\bX^{\star}$ in accordance to (\ref{eq:6565}) in one-bit phase retrieval by OPeRA
with a probability of at least $1-\inf\left\{M_{T}e^{-at}\right\}$, the number of measurements $m$ must obey
\begin{equation}
\label{bound5}
\begin{aligned}
m&\geq \frac{1}{\gamma_{1}} \ln\left(\frac{\epsilon_{0}}{\epsilon_{1}-q^{i}\omega_{0}}\right),
\end{aligned}
\end{equation}
where $\epsilon_{0}$ and $\gamma_{1}$ are determined via (\ref{eq:6585}), $\omega_{0}=\left\|\bX_{0}-\bX^{\star}\right\|_{\mathrm{F}}^{2}$ is the initial squared-error, and $i$ is the number of iterations.
\end{theorem}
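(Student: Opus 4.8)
The plan is to start directly from the error bound already established in (\ref{bound3}), namely
\begin{equation*}
\mathbb{E}\left\{\left\|\bX_{i}-\bX^{\star}\right\|_{\mathrm{F}}^{2}\right\}\leq q^{i}\left\|\bX_{0}-\bX^{\star}\right\|_{\mathrm{F}}^{2}+\epsilon_{0}e^{-\gamma_{1} m},
\end{equation*}
which holds for $m\geq m^{\star}$ with the stated probability $1-\inf_{t\ge 0}\{M_{T}e^{-at}\}$ inherited from the Chernoff analysis of Theorem~\ref{theorem_0}. The idea is simply that a \emph{sufficient} condition for the termination criterion (\ref{eq:6565}) is that the right-hand side of (\ref{bound3}) be at most $\epsilon_{1}$, i.e.
\begin{equation*}
q^{i}\omega_{0}+\epsilon_{0}e^{-\gamma_{1} m}\leq \epsilon_{1},
\end{equation*}
writing $\omega_{0}=\left\|\bX_{0}-\bX^{\star}\right\|_{\mathrm{F}}^{2}$ as in the statement.

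Next I would solve this inequality for $m$. Rearranging gives $\epsilon_{0}e^{-\gamma_{1} m}\leq \epsilon_{1}-q^{i}\omega_{0}$; assuming the number of iterations $i$ is large enough (or the initial point good enough) that $\epsilon_{1}-q^{i}\omega_{0}>0$ — which is legitimate since $q\in(0,1)$ so $q^{i}\omega_{0}\to 0$ — one may divide and take logarithms. Since $e^{-\gamma_{1} m}$ is decreasing in $m$ and $\gamma_{1}>0$, the inequality $e^{-\gamma_{1} m}\leq (\epsilon_{1}-q^{i}\omega_{0})/\epsilon_{0}$ is equivalent to $-\gamma_{1} m\leq \ln\!\big((\epsilon_{1}-q^{i}\omega_{0})/\epsilon_{0}\big)$, i.e.
\begin{equation*}
m\geq \frac{1}{\gamma_{1}}\ln\!\left(\frac{\epsilon_{0}}{\epsilon_{1}-q^{i}\omega_{0}}\right),
\end{equation*}
which is precisely (\ref{bound5}). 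I would then remark that this lower bound must also respect $m\geq m^{\star}$ for (\ref{bound3}) to be in force, so the operative requirement is the maximum of the two; and that the probability qualifier carries over verbatim from Theorem~\ref{theorem_0} because (\ref{bound3}) is exactly the event on which the Chernoff tail bound was asserted.

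The routine part is the algebra; the only genuine subtlety — and the step I would flag as the main obstacle — is justifying that $\epsilon_{1}-q^{i}\omega_{0}$ is positive (and indeed that $\epsilon_{0}/(\epsilon_{1}-q^{i}\omega_{0})\ge 1$ so the logarithm is nonnegative and the bound is non-vacuous), which is a constraint linking the target accuracy $\epsilon_{1}$, the iteration count $i$, and the initialization quality $\omega_{0}$. I would handle this by stating it as a standing assumption (choose $i$ large enough that $q^{i}\omega_{0}<\epsilon_{1}$, always possible since $q<1$), consistent with the earlier observation in the text that "by using a well-chosen initial point $\mathbf{x}_{0}$ or by increasing the number of iterations $i$, the recovery error can be further contained." With that in place the proof is a two-line manipulation of (\ref{bound3}).
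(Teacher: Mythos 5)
Your proposal matches the paper's own proof essentially verbatim: both start from (\ref{bound3}), impose that its right-hand side be at most $\epsilon_{1}$, and solve $\epsilon_{0}e^{-\gamma_{1}m}\leq\epsilon_{1}-q^{i}\omega_{0}$ for $m$ by taking logarithms. Your explicit flagging of the standing assumption $q^{i}\omega_{0}<\epsilon_{1}$ (and the caveat that the bound must also respect $m\geq m^{\star}$) is a point the paper leaves implicit, but it does not change the argument.
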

\begin{proof}
To satisfy (\ref{eq:6565}), the inequality in (\ref{bound3}) is adjusted to capture the upper bound $\epsilon_{1}$. As a result, the following inequality is obtained:
\begin{equation}
\label{bound31}
\begin{aligned}
q^{i}\left\|\bX_{0}-\bX^{\star}\right\|_{\mathrm{F}}^{2}+\epsilon_{0}e^{-\gamma_{1} m}\leq \epsilon_{1}.
\end{aligned}
\end{equation}
Note that $\omega_{0}$ is a constant scalar that only depends on the initial and optimal signals. According to (\ref{bound3}), we can write
\begin{equation}
\label{bound30}
e^{-\gamma_{1} m}\leq \frac{\epsilon_{1}-q^{i}\omega_{0}}{\epsilon_{0}},
\end{equation}
or equivalently,
\begin{equation}
\label{iiii}
m\geq \frac{1}{\gamma_{1}}\ln\left(\frac{\epsilon_{0}}{\epsilon_{1}-q^{i}\omega_{0}}\right).
\end{equation}
which proves the theorem.
\end{proof}
The sufficient number of measurements $m$ is sought to create a finite-volume in the PSD cone $\bX \succeq 0$ for the input signal $\mathbf{x}$ with the size $n$. It is easy to verify that the dimension of the PSD cone is equal to $\frac{n^{2}+n}{2}$ in both real-valued and complex valued cases. 
The infimum number of the hyperplanes creating a finite-volume space in a $n^{\prime}$-dimensional region is $n^{\prime}+1$. Consequently, we can write:
\begin{equation}
\inf \mathcal{K}= \frac{n^{2}+n}{2}+1,\quad \mathcal{K}\subseteq\mathcal{J}.
\end{equation}
Therefore, to form the finite-volume space, the sample size $m$ must be lower bounded by $\inf\mathcal{K}$:
\begin{equation}
\label{caom}
m\geq \left(\frac{n^{2}+n}{2}+1\right).
\end{equation}
The above bound helps us to establish a clear connection between the sample size and the problem dimension. However, since the boundary in (\ref{bound5}) is concerned with containing the error after a finite-volume is formed, it is asymptotically tighter than (\ref{caom}). Therefore, by satisfying (\ref{bound5}), the lower bound in (\ref{caom}) is typically met as well.
\section{Complexity Investigation}
\label{sec_com}
We examine the computational cost associated with the use of more samples by comparing our approach (OPeRA) with the PhaseLift method and its one-bit extension (one-bit PhaseLift) as defined in (\ref{eq:6}). This comparison is based on the required computational time for different sample sizes.
\subsection{Comparing PhaseLift and OPeRA}
\label{sec_com_sdp}
To compare the computational time for the SDP-based PhaseLift approach and our proposed method, the iterative algorithms are terminated according to (\ref{eq:6565}) with $\epsilon_{1}=10^{-2}$. The unknown signal $\mathbf{x}$ lies within $\mathbb{R}^{10}$. The number of samples $m$ is set to be $100$, $1000$, $3000$, $4000$, $5000$, $10000$, $30000$, $50000$, $80000$, and $100000$. Each CPU time is obtained by averaging over $5$ experiments. PhaseLift is applied to the high-resolution samples, whereas OPeRA is applied to their one-bit sampled data counterpart, which means only partial information is made available to OPeRA.

As can be seen in Fig.~\ref{figure_6}, due to the growing number of samples, the cost of the PhaseLift algorithm has an increasing trend. Nevertheless, the CPU time for OPeRA experiences a significant decline rate from $m=100$ up until $m=50000$, while it starts increasing afterwards. The reason behind this behavior is hidden in the application of RKA. As discussed in Section~\ref{error_bound}, to create a finite-volume space around the optimal point $\bX^{\star}$, and to capture error upper bound $\epsilon_{1}$, the number of samples has to move to the large-scale regimen. One may simply verify that, according to Theorem~\ref{theorem_1}, by increasing the number of measurements $m$, the RKA may achieve the error upper bound with fewer iterations $i$. Additionally, the computational cost of the RKA used to solve (\ref{eq:25}) behaves as $\mathcal{O}\left(i n^{2}\right)$. As a result, the CPU time for OPeRA may initially decrease with a growing number of samples.

Surpassing $m=3000$, the CPU time for OPeRA becomes smaller than that of the PhaseLift method; i.e. by employing sufficient number of samples, less complexity is achievable which is facilitated by dropping the PSD constraint. However, after approaching $m=50000$, we have increased the number of inequalities in such a way that the contribution to signal recovery is negligible, while at the same time, the extra measurements will still need processing. These extra inequalities may require more iterations to take them into account which is undesirable and increases the CPU time; see Fig.~\ref{figure_6}. Interestingly, OPeRA can satisfy the recovery criterion (\ref{eq:6565}) with less CPU time and less input information compared to PhaseLift which is useful for the high-resolution scenario.
\begin{figure}[t]
	\center{\includegraphics[width=0.6\textwidth]{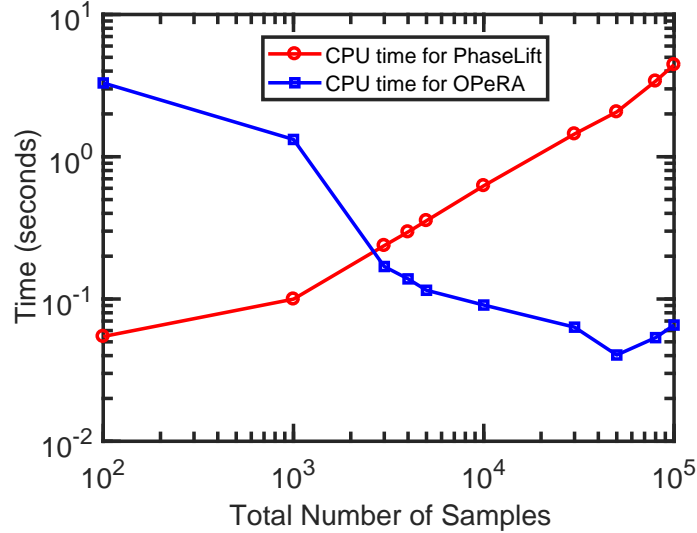}}
	\caption{Comparing the PhaseLift method and OPeRA per CPU time when both methods use the stopping criterion $\mathbb{E}\left\{\left\|\bX_{i}-\bX^{\star}\right\|_{\mathrm{F}}^{2}\right\}\leq 10^{-2}$. Note that PhaseLift is applied to the high-resolution samples whereas only the one-bit version of the samples are made available to OPeRA.}
	\label{figure_6}
\end{figure}
\subsection{Comparing One-bit PhaseLift and OPeRA}
\label{sec_com_LP}
Next, we compare the CPU time of the one-bit PhaseLift defined in (\ref{eq:6}) and that of OPeRA. As mentioned before, the one-bit PhaseLift problem is a SDP similar to the problem (\ref{eq:1nnnn}).

To compare the relaxation-based formulation of the one-bit PhaseLift with our approach, we aim at the recovery of a signal $\mathbf{x}\in \mathbb{R}^{10}$ with sample sizes $m \in \left\{1000, 3000, 5000, 8000, 10000\right\}$. The termination criterion of both algorithms is exactly similar to the one in Section~\ref{sec_com_sdp}.

As can be seen in Fig.~\ref{figure_10}, by growing the number of one-bit samples, the CPU time of the one-bit PhaseLift is increasing. On the other hand, the CPU time of OPeRA is decreasing. One may conclude that in the large sample size regimen, the proposed one-bit phase retrieval approach has a lower computational burden than one-bit PhaseLift. This due to enjoying the advantages of using more samples to make both the rank-one constraint and the PSD constraint redundant. We hypothesis that the computation time rise due to unhelpful extra samples (in achieving the error bound) occur well beyond $m=10000$, which is presumably why the increase is not observed in our experiment.
\section{Adaptive Time-Varying Thresholding}
\label{sec_adaptive}
Hereafter, we propose an adaptive threshold design strategy for the task of one-bit phase retrieval. By the spirit of using the iterative RKA, a suitable time-varying threshold can be chosen in order to find the optimal solution with enhanced accuracy. As discussed earlier, with sample abundance, we have an overdetermined linear system of inequalities creating a finite-volume space. To capture the desired signal matrix $\bX^{\star}$ more efficiently, the right-hand side of the inequalities in (\ref{eq:25}), i.e. $r_{j}\tau^{2}_{j}$, must be determined in a way that each associated hyperplane passes through the desired feasible region within the PSD cone. Therefore, an algorithm is proposed to ensure that this occurs. To give an illustration, we suppose the solution is the yellow point in Fig.~\ref{figure_tresh}. Geometrically, with an adaptive time-varying threshold algorithm, our goal is to generate an informative sampling threshold creating the inequality constraint corresponding to the hyperplane shown by the blue line.

Unlike the other two inequality constraints (hyperplanes illustrated by green and  purple lines in Fig.~\ref{figure_tresh}), the blue hyperplane will further shrink the feasible region for signal recovery. From this viewpoint, the other hyperplanes (green and purple lines) constitute extra inequality constraints that are not informative. As discussed in Section~\ref{sec_com_sdp}, such extra samples (the extra inequality constraints) only increase the computational burden of the phase retrieval task.
\begin{figure}[t]
	\center{\includegraphics[width=0.6\textwidth]{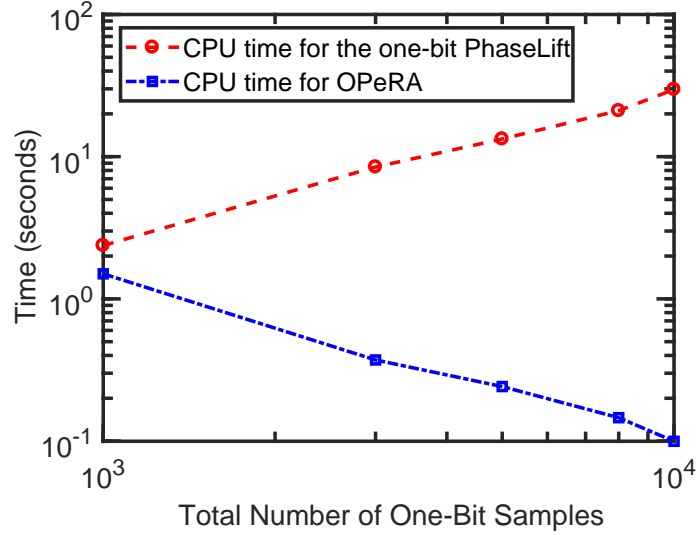}}
	\caption{Comparing the one-bit PhaseLift method and OPeRA based on their CPU time when both algorithms are terminated using the criterion $\mathbb{E}\left\{\left\|\bX_{i}-\bX^{\star}\right\|_{\mathrm{F}}^{2}\right\}\leq 10^{-2}$.}
	\label{figure_10}
\end{figure}
\subsection{Adaptive Threshold Design for OPeRA}
\label{sec:adaptive_algorithm_design}
In light of previous discussion, to achieve a better recovery accuracy for a specific sample size $m$, one may use the idea of shrinking the space imposed by the set of inequalities in (\ref{eq:25}) around the optimal solution $\bX^{\star}$. To make this happen, we propose an iterative algorithm generating an adaptive threshold to accurately obtain the desired solution. To diminish the area of the finite-volume space around the optimal point, we update the time-varying threshold as
\begin{equation}
\label{eq:200}
\operatorname{Tr}\left(\bV_{j}\bX^{(k)}\right)-r_{j}^{(k)}\epsilon^{(k)}_{j} = \left(\tau^{(k+1)}_{j}\right)^{2}, \quad j\in \mathcal{J},
\end{equation}
where $\left\{\epsilon_{j}^{(k)}\right\}$ at the $j$-th element of a positive vector $\bepsilon^{(k)}$ in the $k$-th iteration. This updating process is based on the fact that when $r_{j}=+1$, we have $\operatorname{Tr}\left(\bV_{j}\bX\right)\geq (\tau_{j})^{2}$, and $\operatorname{Tr}\left(\bV_{j}\bX\right)\leq (\tau_{j})^{2}$ otherwise. The reason behind updating the one-bit measurements $\left\{r_{j}^{(k)}\right\}$, is to ensure that the optimal solution $\bX^{\star}$ satisfies (\ref{eq:25}) in iteration $k$, i.e. the inequalities $r_{j}^{(k)} \operatorname{vec}\left(\bV^{\top}_{j}\right)^{\top}\operatorname{vec}\left(\bX^{\star}\right) \geq r_{j}^{(k)}\left(\tau_{j}^{(k)}\right)^{2}$ for $j\in \mathcal{J}$. Our proposed iterative method to update the time-varying threshold is summarized in Algorithm~\ref{algorithm_1}.
\begin{figure}[t]
	\center{\includegraphics[width=0.6\textwidth]{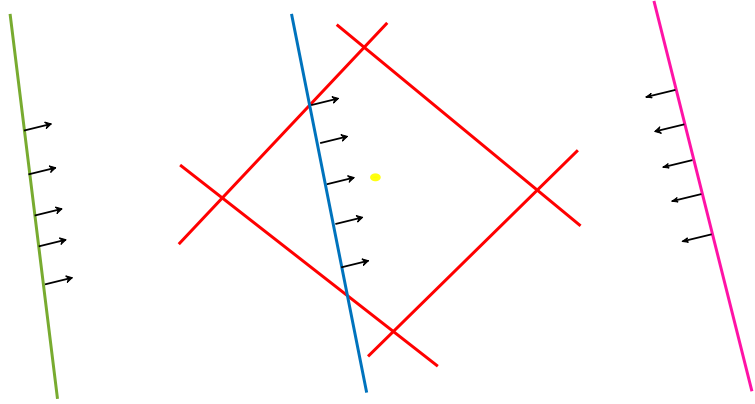}}
	\caption{The signal of interest  (shown by yellow circle) is located in the finite-volume space which presents itself as the red semi-rectangular. The new shrunken feasible region created by the blue hyperplane captures the desired signal. However, the other two sampling hyperplanes are non-informative.}
	\label{figure_tresh}
\end{figure}
\subsection{Numerical Study of Adaptive Thresholding}
\label{num_adaptive}
To present the efficacy of the adaptive time-varying threshold algorithm in comparison with a non-negative random threshold, the unknown signal $\mathbf{x}$ and the sensing matrix $\bA$ are generated using the same settings as in Section~\ref{NUM}. As can be seen in Fig.~\ref{figure_7}, the performance of our proposed algorithm is evaluated by the NMSE defined in (\ref{eq:4000}) which is considerably enhanced in comparison with adopting a non-negative random threshold according to $\operatorname{Lognormal}\left(0,1\right)$. The results are obtained for the sample sizes $m\in \left\{1000, 5000, 10000, 50000, 100000\right\}$
and $\delta=10^{-3}$. Each presented data point is averaged over $5$ experiments.

In the following, we compare the number of measurements $m$ required to recover the rank-one and PSD matrix $\bX^{\star}$ using the OPeRA with (i) our proposed adaptive threshold algorithm, as well as (ii) a random non-negative threshold. The result can be summarized as follows.
\begin{theorem}
\label{lemma_1}
OPeRA with the adaptive sampling threshold proposed in Algorithm~\ref{algorithm_1} can recover the rank-one and PSD matrix $\bX^{\star}$ with a high probability of at least $1-\inf\left\{M_{T}e^{-at}\right\}$ by using a smaller number of measurements $m$ in comparison to OPeRA with a random threshold.
\end{theorem}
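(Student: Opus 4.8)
The plan is to reduce Theorem~\ref{lemma_1} to Theorem~\ref{theorem_1} by showing that the adaptive recursion of Algorithm~\ref{algorithm_1} produces, at every iteration, a family of hyperplanes whose distances to $\bX^{\star}$ are no larger (in a moment-dominance sense) than those induced by a fixed random threshold, so that every quantity feeding the measurement bound (\ref{bound5}) moves in the favorable direction. First I would verify the feasibility invariant already flagged in the discussion around (\ref{eq:200}): since the update sets $\left(\tau_j^{(k+1)}\right)^{2} = \operatorname{Tr}(\bV_j\bX^{(k)}) - r_j^{(k)}\epsilon_j^{(k)}$ with $\epsilon_j^{(k)}>0$ and $r_j^{(k)}$ re-signed accordingly, the point $\bX^{\star}$ keeps satisfying $r_j^{(k)}\operatorname{vec}(\bV_j^{\top})^{\top}\operatorname{vec}(\bX^{\star}) \geq r_j^{(k)}\left(\tau_j^{(k)}\right)^{2}$ for all $j\in\mathcal{J}$; hence $\bX^{\star}$ never leaves the polyhedron (\ref{eq:25}), the RKA convergence estimate (\ref{bound1})--(\ref{bound2}) remains applicable, and the probability guarantee $1-\inf\{M_T e^{-at}\}$ inherited from Theorem~\ref{theorem_1} is untouched. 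Thus only the lower bound on $m$ needs to be revisited.

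Next I would quantify the extra shrinkage. Under (\ref{eq:200}) the $j$-th hyperplane is repositioned so that its signed offset from $\bX^{(k)}$ equals the controllable quantity $\epsilon_j^{(k)}$, which forces the distance $H_j(\bX^{\star},\bX^{(k)})$ of (\ref{distance}) to be governed by $|\operatorname{Tr}(\bV_j(\bX^{(k)}-\bX^{\star}))|$ together with the vanishing sequence $\{\epsilon_j^{(k)}\}$, rather than by the uncontrolled gap produced by a draw $\tau_j\sim\operatorname{Lognormal}(0,1)$. Consequently the per-sample averages $\mathcal{E}(\bX^{\star})$ in (\ref{average}) and $T_{\text{ave}}$ in (\ref{ave}) are driven down faster, and in particular each moment $\mu^{(\kappa)}_{H_j}=\mathbb{E}\{H_j^{\kappa}\}$ appearing in the MGF (\ref{eq:psi}) is no larger than its random-threshold analogue. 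Propagating these smaller moments through the Padé/moment-matching argument used in the proof of Theorem~\ref{theorem_0} shows that $M_T$ — hence the penalty $\Psi(m)$ (e.g. $M_T-M_\infty$) and its tail $\Psi^{\star}(m)$ in (\ref{tail}) — is pointwise smaller for the adaptive scheme, so in the exponential fit (\ref{eq:6585}) the prefactor $\epsilon_0$ is smaller and the decay rate $\gamma_1$ is no smaller.

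Finally I would substitute into the bound (\ref{bound5}) of Theorem~\ref{theorem_1},
\begin{equation*}
m \geq \frac{1}{\gamma_1}\ln\left(\frac{\epsilon_0}{\epsilon_1-q^{i}\omega_0}\right),
\end{equation*}
and note that its right-hand side is increasing in $\epsilon_0$ and decreasing in $\gamma_1$, while $q$, $\omega_0$, $i$, $\epsilon_1$ are unchanged; since the adaptive threshold gives $\epsilon_0^{\mathrm{adap}}\leq\epsilon_0^{\mathrm{rand}}$ and $\gamma_1^{\mathrm{adap}}\geq\gamma_1^{\mathrm{rand}}$, the required number of measurements for the adaptive scheme is no larger, which is the claim. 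The main obstacle I anticipate is making the ``smaller distances'' step rigorous: one must establish a genuine moment-dominance relation between $H_j$ under the adaptive update and $H_j$ under the random threshold, uniformly in $j$ and in the iteration index $k$, while correctly accounting for the re-signing of $r_j^{(k)}$ and the coupling between the iterate $\bX^{(k)}$ and the threshold it generates; once that inequality is in hand, the monotonicity of $M_T$, of the exponential tail fit, and of (\ref{bound5}) are routine.
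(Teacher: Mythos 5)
Your argument follows essentially the same route as the paper's proof: adaptive thresholding shrinks the feasible region around $\bX^{\star}$ more aggressively, hence the distances $H_{j}$ of (\ref{distance}) and their moments $\mu^{(\kappa)}_{H_{j}}$ diminish, hence $M_{T}$ is smaller, and Theorem~\ref{theorem_0} together with Theorem~\ref{theorem_1} then yields the same recovery guarantee with a smaller $m$. You are in fact more explicit than the paper --- checking that $\bX^{\star}$ remains feasible under the re-signing in (\ref{eq:200}), tracing the effect through $\epsilon_{0}$ and $\gamma_{1}$ in (\ref{bound5}), and flagging the moment-dominance step that would need to be made rigorous --- whereas the paper simply asserts that the moments ``further diminish'' and invokes the two theorems.
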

\begin{proof}
\label{proof_adaptive}
Let $m_{f}$ denote the number of inequalities in (\ref{eq:25}) creating a finite-volume space around the desired matrix $\bX^{\star}$, which is not necessarily inside the PSD cone. As discussed in Section~\ref{sec:adaptive_algorithm_design}, our proposed adaptive thresholding algorithm shrinks the finite-volume space around the optimal solution $\bX^{\star}$ in a stronger way than the random thresholds. As a result, $\left\{H_{j}\right\}$ defined in (\ref{distance}) as well as their moments $\mu^{(k)}_{H_{j}}$, will further diminish which leads to a smaller value for $M_{T}$. Therefore, according to Theorem~\ref{theorem_0} and Theorem~\ref{theorem_1}, a similar recovery performance can be expected with a smaller sample size when the adaptive sampling threshold proposed in Algorithm~\ref{algorithm_1} is utilized.
\end{proof}
To numerically scrutinize our claim in Theorem~\ref{lemma_1}, Table~\ref{table_1} illustrates that the sufficient number of samples $m$ required for OPeRA to recover a PSD matrix with adaptive sampling thresholds proposed in Algorithm~\ref{algorithm_1} is much less than that of OPeRA with a random threshold. The result is obtained for the input signal $\mathbf{x}$, a random time-varying threshold $\btau$, and the sensing matrix $\bA$ originating from the same settings as presented in Section~\ref{NUM}. The number of samples examined in our experiments are same as Section~\ref{sec_com_sdp}, plus $m=500$.
\begin{algorithm}[t]
\SetAlgoLined
\emph{Input:} One-bit data: $\br$, sensing matrix: $\bA$, initial time-varying thresholds: $\btau^{(0)}\sim \operatorname{Lognormal}\left(0,1\right)$ (with the same length as $\br$), small positive number: $\delta$.\\
\emph{Output:} Adaptive threshold: $\btau^{\star}$.\\
\emph{Note:} $\bX^{(k)}$, $\btau^{(k)}$, $\br^{(k)}$ and $\bepsilon^{(k)}$ denote their associated values at iteration $k$.\\
\SetAlgoLined

- Set $\br^{(0)}=\br$.\\
- Calculate the matrix $\bV$ with the rows $\{\operatorname{vec}\left(\ba_{j}\ba^{\mathrm{H}}_{j}\right)^{\top}\}$.\\
- Initiate the following loop by setting $k=0$.\\
\While {$\left\|\btau^{(k+1)}-\btau^{(k)}\right\|_{2}\leq \delta$}{
- Find a point inside the following polyhedron with the RKA for $\btau=\btau^{(k)}$:
\[
\mathcal{P}=\left\{\bX^{(k)} \mid \left(\bR^{(k)}\odot \bV\right)\operatorname{vec}\left(\bX^{(k)}\right)\geq\bb^{(k)}\right\},
\]
where $\bR^{(k)}$ denotes the replica form of $\br^{(k)}$ and $\bb^{(k)}=\br^{(k)} \odot \left(\btau^{(k)}\right)^{2}$.\\
- Update $\btau^{(k+1)}$ as:
\[
\br^{(k)} \odot \left(\btau^{(k+1)}\right)^{2}=\left(\bR^{(k)}\odot \bV\right)\operatorname{vec}\left(\bX^{(k)}\right)-\frac{\bepsilon^{(k)}}{2},
\]
where $\bepsilon^{(k)}$ is computed as:
\[
\bepsilon^{(k)}=\left(\bR^{(k)}\odot \bV\right)\operatorname{vec}\left(\bX^{(k)}\right)-\br^{(k)} \odot \left(\btau^{(k)}\right)^{2}.
\]
- Update $\br^{(k+1)}$ based on (\ref{eq:11}).\\
- Increase $k$ by one.
}
\caption{Adaptive Algorithm for Sampling Threshold Selection}
\label{algorithm_1}
\end{algorithm}

To further investigate the effectiveness of Algorithm~\ref{algorithm_1}, we show that the average of all eigenvalues $\{\ell_{i}\}$ of $\bar{\bX}$ except the maximum eigenvalue (spectral radius) approaches zero by increasing the number of measurements---see Fig. \ref{eigenvalue_adaptive}. Interestingly, Fig.~\ref{eigenvalue_adaptive} reaffirms that the number of measurements used to recover the rank-one and PSD matrix $\bX^{\star}$ by OPeRA with adaptive thresholding, is much less than the same result reported in Fig.~\ref{figure_3} where OPeRA with a random threshold is adopted. The results are obtained for the number of samples $m\in\left\{300, 500, 1000, 2000\right\}$, they are averaged over $5$ experiments, and the non-dominant eigenvalues are arranged in a decreasing order.
\begin{figure}[t]
	\center{\includegraphics[width=0.6\textwidth]{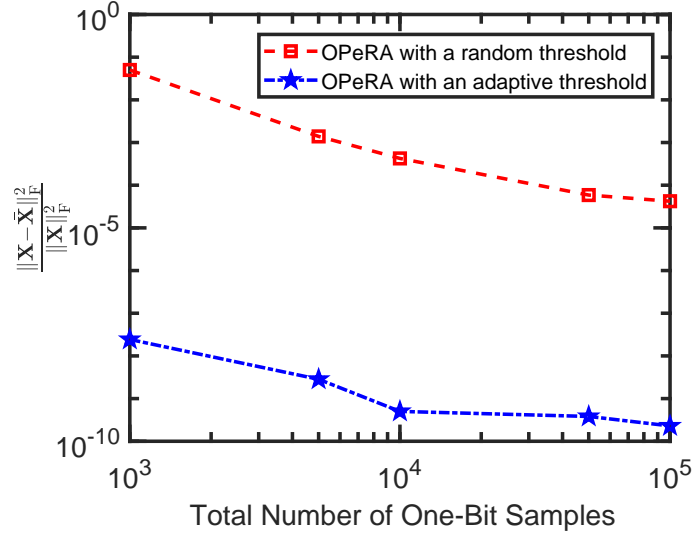}}
	\caption{Comparing the average NMSE for the Frobenius norm between the desired matrix $\bX^{\star}$ and its recovered matrix using OPeRA when (i) a random threshold and (ii) the adaptive sampling threshold proposed in Algorithm~\ref{algorithm_1}, are adopted.}
	\label{figure_7}
\end{figure}

\section{One-Bit Phase Retrieval with Noisy Measurements}
\label{sec_denoising}
In this section, we extend our study to signal recovery from noisy one-bit data in the phase retrieval problem. In most practical applications, we must rely on noisy measurements \cite{candes2015phase,davenport20141,bhaskar20151}. In particular, we will examine whether the computational advantages provided by sample abundance in the noiseless scenario will also be observed under the presence of noise.

\subsection{Problem Formulation}
\label{Sec_prob_noisy}
Define the positive-valued vector $\bmu=\left[\mu_{1},\cdots,\mu_{m}\right]$ by
\begin{equation}
\label{eq:100}
\mu_{j} = \operatorname{Tr}\left(\bV_{j} \bX\right),\quad j\in \mathcal{J}.
\end{equation}
Let $\blambda$ and $\mathbf{z}$ denote the time-varying threshold vector and the noise vector, respectively. The noisy one-bit samples are generated as
\begin{equation}
\label{eq:101}
\begin{aligned}
r_{j} &= \begin{cases} +1 & \mu_{j}+z_{j}>\lambda_{j}, \\ -1 & \mu_{j}+z_{j}<\lambda_{j}.
\end{cases}
\end{aligned}
\end{equation}
The occurrence probability vector $\bp$ for the noisy one-bit measurement $\br$ is given as \cite{bhaskar20151},
\begin{equation}
\label{eq:102}
\begin{aligned}
p_{j} &= \begin{cases} \Phi(\mu_{j}-\lambda_{j}) & \text{for}\quad \{r_{j}=+1\}, \\ 1-\Phi(\mu_{j}-\lambda_{j}) &  \text{for}\quad \{r_{j}=-1\},
\end{cases}
\end{aligned}
\end{equation}
where $\Phi(.)$ is the CDF of $-\mathbf{z}$. Since $\{\mu_{j}\}$ are linear function of $\bX$, the CDF of noise $\left\{\Phi(\mu_{j}-\lambda_{j})\right\}$ can be written as $\Phi\left(\bX\right)$. The log-likelihood function of the sign data $\br$ is given by
\begin{equation}
\label{eq:103}
\begin{aligned}
\mathcal{L}_{\br}(\bmu,\bX) &= \sum^{m}_{j=1}\left\{\mathbb{I}_{(r_{j}=+1)}\log\left(\Phi(\mu_{j}-\lambda_{j})\right) \right.\\& \left.+\mathbb{I}_{(r_{j}=-1)}\log\left(1-\Phi(\mu_{j}-\lambda_{j})\right)\right\} .
\end{aligned}
\end{equation}
Interestingly, by solving the maximum log-likelihood estimation (MLE) problem associated with (\ref{eq:103}), our desired matrix $\bX^{\star}$ can be immediately approximated. The proposed algorithm is called \emph{Noisy OPeRA}.
\begin{table} [t]
\caption{The number of samples required to recover a PSD matrix.}
\centering
\begin{tabular}{ | c | c | }
\hline
\text {Proposed Algorithm} & \text {$m$} \\[0.5 ex]
\hline \hline
\text{OPeRA with a random threshold} & $30000$ \\[1 ex]
\hline
\text{OPeRA with the adaptive threshold} & $500$ \\[1 ex]
\hline
\end{tabular}
\label{table_1}
\end{table}
\begin{figure}[t]
	\center{\includegraphics[width=0.6\textwidth]{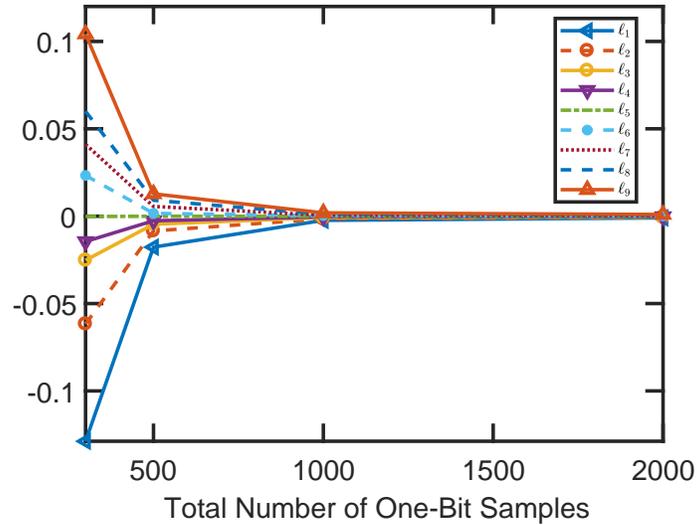}}
	\caption{All eigenvalues $\{\ell_{i}\}$ of $\bar{\bX}$ except the dominant eigenvalue (spectral radius), averaged over $5$ experiments. As can be seen, deploying OPeRA with the adaptive sampling thresholds, leads to obtaining a nearly rank-one and PSD matrix with significantly enhanced accuracy as the number of samples grows large.}
	\label{eigenvalue_adaptive}
\end{figure}
\subsection{Noisy One-Bit Phase Retrieval via Convex Programming}
\label{Sec_convex_noisy}
A preliminary formulation of our optimization problem based on the MLE may be cast as:
\begin{equation}
\label{eq:104}
\begin{aligned}
\min_{\bmu,\bX}  \quad &-\mathcal{L}_{\br}(\bmu,\bX)\\
\text{s.t.}\quad &\mu_{j} = \operatorname{Tr}\left(\bV_{j}\bX\right), \quad j\in \mathcal{J},\\ 
&\operatorname{rank}\left(\bX\right)=1,\\
&\bX \succeq 0.
\end{aligned}
\end{equation}
This problem is the one-bit version of its counterpart formulated in \cite{candes2015phase}. However, as discussed in previous sections, because of employing one-bit sampling, the large number of samples can be adopted which leads to the availability of a large number of sign data $\left\{r_{j}\right\}$ and the corresponding inequality constraints; since when $r_{j}=+1$, we have $\operatorname{Tr}\left(\bV_{j}\bX\right)\geq (\tau_{j})^{2}$, and $\operatorname{Tr}\left(\bV_{j}\bX\right)\leq (\tau_{j})^{2}$ otherwise. These inequalities are collected to form the polyhedron (\ref{eq:25}). However, these constraints may be equivalently absorbed in the objective function to facilitate the one-bit phase retrieval formulation in the noisy case. Therefore, the problem (\ref{eq:104}) can be reformulated as
\begin{equation}
\label{eq:104}
\begin{aligned}
\min_{\bmu,\bX}  \quad &-\mathcal{L}_{\br}(\bmu,\bX)\\
\text{s.t.}\quad &\mu_{j} = \operatorname{Tr}\left(\bV_{j}\bX\right), \quad j\in \mathcal{J}.
\end{aligned}
\end{equation}
In many cases, $\mathcal{L}_{\br}(\bmu,\bX)$ is a concave function and thus the above programs becomes convex. One can readily verify this in the case of a Gaussian noise \cite{davenport20141}. In the rest of our paper, $-\mathbf{z}$ is assumed to be an i.i.d. zero-mean Gaussian process $\mathbf{z}\sim \mathcal{N}\left(0,\sigma^{2}_{\mathbf{z}} \bI_{m}\right)$, for which $\Phi(.)$ is given in (\ref{eq:1bbb}).
\begin{figure}[t]
	\center{\includegraphics[width=0.6\textwidth]{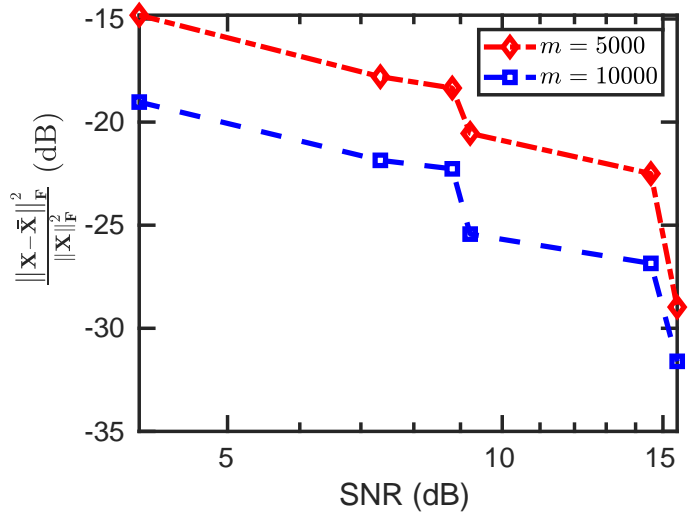}}
	\caption{Average NMSE (dB) in the results obtained by the MLE problem (\ref{eq:104}) over different SNRs and for two different sample sizes $m\in\left\{5000, 10000\right\}$. It is observed that by increasing SNR and the number of one-bit measurements, the recovery performance of Noisy OPeRA is enhanced.}
	\label{figure_8}
\end{figure}

\subsection{Numerical Investigation of Noisy OPeRA}
\label{Sec_nr_noisy}
To examine the performance of Noisy OPeRA in practice, and to validate the theoretical results described in this section, we consider signal recovery with different values of  $\sigma_{\mathbf{z}}\in \left\{0.1, 0.2, 0.4, 0.5, 0.7, 1\right\}$, where the unknown signal $\mathbf{x}$ was generated in a similar manner as in Section~\ref{NUM}. The stochastic threshold $\blambda$ was generated according to  $\blambda \sim \mathcal{N}\left(0,I_{m}\right)$. The signal to noise ratio (SNR) is evaluated as:
\begin{equation}
\label{eq:1001}
\mathrm{SNR} = \frac{\frac{1}{m}\sum^{m}_{j=1}\mu^{2}_{j}}{\sigma^{2}_{\mathbf{z}}}.
\end{equation}
In Fig.~\ref{figure_8}, the recovery performance is illustrated by using the NMSE defined in (\ref{eq:4000}), with the results averaged over $10$ experiments. We report both SNR and NMSE in dB ($10\log(.)$). As expected, by increasing the SNR, the performance of our method is improved. Furthermore, the performance of the estimation problem formulation in (\ref{eq:104}) is enhanced by increasing the number of one-bit samples $m\in \{5000,10000\}$. In this approach, since the desired matrix $\bX^{\star}$ is recovered statistically from MLE, we compare $\Phi\left(\bX\right)$ and $\Phi(\bar{\bX})$ by resorting to a widely used statistical distance, known as the Hellinger distance, which was defined in (\ref{eq:1bbbb}). The vector entry-wise formula of the Hellinger distance is given as
\begin{equation}
\label{eq:10000}
\bd^{2}_{H}\left(\Phi\left(\bX\right),\Phi(\bar{\bX})\right)=\frac{1}{m}\sum^{m}_{j=1} d^{2}_{H}\left(\Phi\left(\mu_{j}-\lambda_{j}\right),\Phi(\bar{\mu}_{j}-\lambda_{j})\right),
\end{equation}
where $\{\bar{\mu}_{j}\}$ is the estimated version of $\{\mu_{j}\}$ obtained from (\ref{eq:104}). As was previously observed, by increasing the value of SNR, Noisy OPeRA performs better in terms of the NMSE . A similar behavior occurs with the Hellinger distance shown in Fig.~\ref{figure_9}. The Hellinger distance is obtained is very small for all SNR values in this experiment, However, it is decreasing for an increasing SNR, which appears to confirm the recovery of the desired matrix in statistical (noisy) environments by taking advantage of a large number of samples---thus without considering the rank-one and the PSD constraints.
\begin{figure}[t]
	\center{\includegraphics[width=0.6\textwidth]{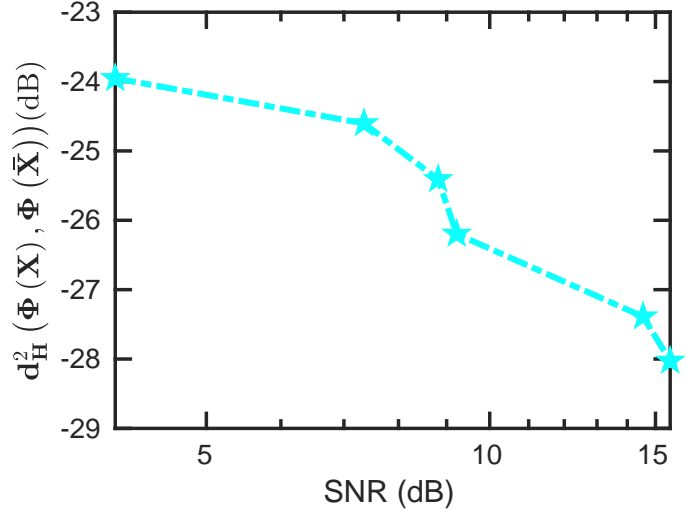}}
	\caption{Comparing the CDF of the desired matrix $\Phi\left(\bX\right)$ and the CDF of the recovered matrix $\Phi\left(\bar{\bX}\right)$ using the Hellinger distance (\ref{eq:10000}). Although the Hellinger distance of our estimation is very small overall, it shows a decreasing behavior as the SNR grows large.}
	\label{figure_9}
\end{figure}

To show the sustained benefit of sample abundance in the noisy case, we compare Noisy OPeRA with \emph{Noisy PhaseLift} formulation firstly introduced in \cite{candes2015phase} as
\begin{equation}
\label{eq:104000}
\begin{aligned}
\min_{\bmu,\bX}  \quad &-\Upsilon_{\bmu}(\bX)+\alpha \operatorname{Tr}\left(\bX\right) \\
\text{s.t.}\quad &\mu_{j} = \operatorname{Tr}\left(\bV_{j}\bX\right), \quad j\in \mathcal{J},\\
&\bX \succeq 0,
\end{aligned}
\end{equation}
where $\Upsilon_{\bmu}(\bX)=\log\left(f\left(\bs|\bmu\right)\right)$, with the noisy measurement vector $\{s_{j}\}$ is sampled from a probability distribution $f(.|\bmu)$, and $\alpha$ is a positive scalar. For our numerical examinations, we assume the measurement noise is distributed as $\mathbf{z}\sim \mathcal{N}\left(0,0.25 \bI_{m}\right)$ and the termination criterion is $\left\|\bX_{i}-\bX^{\star}\right\|_{\mathrm{F}}^{2}\leq 5\times 10^{-3}\left\|\bX^{\star}\right\|_{\mathrm{F}}^{2}$. Table~\ref{table_2} shows that by using a large number of samples (and making rank-one and PSD constraints redundant) in the noisy one-bit sampling scenario, Noisy OPeRA can recover the signal with a better CPU time for sample sizes $m\in\left\{5000, 10000, 20000\right\}$ compared to the noisy PhaseLift method. This is similar to our discussion in the noiseless scenario; see Section~\rom{5}. Interestingly, by growing the number of samples, the NMSE is enhanced more significantly by Noisy OPeRA than that of the noisy PhaseLift method. The results are averaged over $5$ experiments. The settings of the input signal, time-varying thresholds and the sensing are also chosen in the same way as in Section~\ref{NUM}.
\begin{table} [t]
\centering
\caption{Comparing Noisy PhaseLift and Noisy OPeRA in terms of CPU time and NMSE.}
\centering
\begin{tabular}{ | c | c | c | c |}

\hline
\text{Noisy PhaseLift\cite{candes2015phase}} & \text {$m=5000$}  & \text {$m=10000$}& \text {$m=20000$}\\ [0.5 ex]
\hline \hline
CPU time ($s$) & 1.4698 &  2.0305& 3.7529\\
\hline
NMSE & 0.0045 & 0.0041& 0.0035\\
\hline\hline
\text{Noisy OPeRA}& \text {$m=5000$}  & \text {$m=10000$}& \text {$m=20000$}\\
\hline \hline
CPU time ($s$) & 0.9497 &  1.4436& 2.3137\\
\hline
NMSE & 0.0040 &  0.0015& 3.8875e-04 \\
\hline
\end{tabular}
\label{table_2}
\end{table}
\section{Conclusion}
We showed that the abundance of samples that naturally occurs in one-bit sampling scenarios has significant implications in lowering the computational cost of phase retrieval by making costly constraint redundant. The problem then boils down to a set of linear inequalities that may be solved by RKA within the proposed OPeRA signal recovery framework. The numerical results showcased the effectiveness of the proposed approaches for phase retrieval.

\bibliographystyle{IEEEbib}
\bibliography{strings,refs}

\end{document}